\newcommand{\be}{\begin{eqnarray}}
\newcommand{\ee}{\end{eqnarray}}
\newcommand{\bea}{\begin{eqnarray}}
\newcommand{\eea}{\end{eqnarray}}
\newcommand{\nn}{\nonumber}
\newcommand{\bn}{\begin{enumerate}}
\newcommand{\en}{\end{enumerate}}
\def\IR{\mathbb{R}}
\def\CC{{\cal C}}
\def\CD{{\cal D}}
\def\CF{{\cal F}}
\def\CH{{\cal H}}
\def\CI{{\cal I}}
\def\CL{{\cal L}}
\def\CM{{\cal M}}
\def\CN{{\cal N}}
\def\CR{{\cal R}}
\def\CS{{\cal S}}
\def\a{\alpha}
\def\b{\beta}
\def\g{\gamma}
\def\e{\epsilon}
\def\G{\Gamma}
\def\S{\Sigma}
\def\half{\frac{1}{2}}
\def\Tr{{\rm Tr}}
\def\tr{{\rm Tr}}
\def\det{{\rm det}}
\def\ba{{\mathbf{a}}}
\def\bb{{\mathbf{b}}}
\def\bc{{\mathbf{c}}}
\def\bx{{\mathbf{x}}}
\def\bz{{\mathbf{z}}}
\def\bxi{{\boldsymbol\xi}}
\def\schannel{{\hspace{0.15em}\rangle\hspace{-0.15em}-\hspace{-0.15em}\langle}}
\def\hkquot{\,/\hspace{-0.4ex}/\hspace{-0.4ex}/\hspace{-0.4ex}/\,}
\def\kquot{\,/\hspace{-0.4ex}/\,}
\def\pt{\mathrm{pt}}
\def\tth{\tilde{\theta}}
\theoremstyle{plain}
  \newtheorem{thm}{Theorem}
  \newtheorem{prop}{Proposition}
  \newtheorem{defn}{Definition}
  \newtheorem{lem}[prop]{Lemma}
\theoremstyle{remark}
\title{$(0, 4)$ dualities}
\author[a]{Pavel Putrov,}
\author[b]{Jaewon Song,}
\author[a]{and Wenbin Yan}
\affiliation[a]{Walter Burke Institute for Theoretical Physics, California Institute of Technology\\Pasadena, CA 91125, USA}
\affiliation[b]{Department of Physics, University of California, San Diego\\La Jolla, CA 92093, USA}
\emailAdd{putrov@theory.caltech.edu}
\emailAdd{jsong@physics.ucsd.edu}
\emailAdd{wbyan@theory.caltech.edu}
\preprint{CALT-TH 2015-027}
\abstract
{
We study a class of two-dimensional $\CN=(0, 4)$ quiver gauge theories that flow to superconformal field theories. We find dualities for the superconformal field theories similar to the 4d $\CN=2$ theories of class $\CS$, labelled by a Riemann surface $\CC$. The dual descriptions arise from various pair-of-pants decompositions, that involve an analog of the $T_N$ theory. Especially, we find  the superconformal indices of such theories can be written in terms of a topological field theory on $\CC$. We interpret this class of SCFTs as the ones coming from compactifying 6d $\CN=(2, 0)$ theory on $\mathbb{CP}^1 \times \CC$. Moreover, some new dualities of $(0,2)$ and $(2,2)$ theories are also discussed.
}
\begin{document}
\maketitle
\flushbottom

%%%%%%%%%%%%%%%%%%%%%%%%%%%%%%%%%%%%%%%%%%%%%%%%%%%%%
\section{Introduction and summary}

Recent results on two-dimensional gauge theories with $\CN=(0,2)$ theories indicate that the dynamics of such theories can be quite interesting and non-trivial. At the same time the amount of supersymmetry often happens to be sufficient to obtain certain exact results. Such theories have a lot of similarities with $\CN=1$ gauge theories. In particular in \cite{Gadde:2013lxa} it was shown that a large class of $\CN=(0,2)$ theories possess  dualities reminiscent to Seiberg dualities in four dimensions.

In this paper we would like to make a point that $\CN=(0,4)$ theories are likewise similar to $\CN=2$ theories in 4d. In particular we will present  ``2d $\mathcal{N}=(0,4)$ theories of class $\CS$'' analogous to class $\CS$ 4d $\CN=2$ theories introduced in \cite{Gaiotto:2009we, Gaiotto:2009hg}. The latter class of theories has been extensively studied during past years. We show that many statements about $\CN=2$ theories in 4d can be translated into statements about analogous $\CN=(0,4)$ theories. In particular we conjecture dualities among $\CN=(0,4)$ generalized quiver theories analogous to the four-dimensional dualities of \cite{Gaiotto:2009we}.

The main tool that we use to study $\CN=(0,4)$ theories is the superconformal index. We show that it shares a lot of properties with the superconformal index of $\mathcal{N}=2$ 4d theories \cite{Gadde:2009kb,Gadde:2011ik,Gadde:2011uv,Gaiotto:2012xa}. Similarly to the 4d case, the index of ``2d $\mathcal{N}=(0,4)$ theories of class $\CS$'' exhibits a 2d TQFT structure. Following the idea of \cite{Gadde:2010te} we were also able to find an explicit expression for the index of $\CN=(0,4)$ analog of strongly coupled $T_3$ theory with $E_6$ flavor symmetry \cite{Minahan:1996fg}.

Gauge theories with chiral supersymmetry are also interesting because of the possible relation to four-dimensional geometry. Such relation arises from a twisted compactification of a 6d $(2,0)$ SCFT labeled by a Lie algebra $\mathfrak{g}$ on a four-manifold $M_4$. The effective theory in dimension two is usually denoted as $T_\mathfrak{g}[M_4]$. For a 4-manifold of general holonomy one can make a topological twist along $M_4$ such that $T_\mathfrak{g}[M_4]$ has $\mathcal{N}=(0,2)$ supersymmetry. The $(2,0)$ SCFT is a world-volume theory of a stack of M5-branes. Geometrically the twist corresponds to realizing the 4-manifold wrapped by the fivebranes as a coassociative cycle in a 7-dimensional manifold with $G_2$ holonomy embedded into the M-theory space-time. General features of the correspondence $M_4\rightarrow T_\mathfrak{g}[M_4]$ and some particular examples were considered in \cite{Gadde:2013sca,Benini:2013cda}. However identifying $T_\mathfrak{g}[M_4]$ for a generic $M_4$ and $\mathfrak{g}$ is still a very hard task. Therefore considering different concrete examples of 4-manifolds and $\mathfrak{g}$ may help to understand the relation between $M_4$ and $T_\mathfrak{g}[M_4]$ in general.

In the case when 4-manifold $M_4$ is K\"ahler the same twist corresponds to embedding $M_4$ as a complex surface inside a Calabi-Yau threefold. In this case the supersymmetry of the 2d theory $T_\mathfrak{g}[M_4]$ enhances to $\mathcal{N}=(0,4)$. A particular class of such 4-manifolds can be realized by considering holomorphic Lefschetz fibrations, that is holomorphic fibrations of a complex curve with a fixed genus over another curve with possible simple singular fibers. In \cite{Gadde:2014wma} one M5-brane on such 4-manifolds was considered.

One can study even more special class of complex surfaces: products of two complex curves \cite{Benini:2013cda}. In this case it is also possible to consider a twist which preserves $\mathcal{N}=(2,2)$ symmetry in 2d. However the twist preserving $\mathcal{N}=(0,4)$ is more interesting in a way, because in this case the product of curves can be understood just as a particular choice of $M_4$. We would like to conjecture that ``class $\CS$ 2d $\mathcal{N}=(0,4)$ gauge theories'' that we consider in the paper can be realized as $T_{\mathfrak{g}}[\mathbb{CP}^1 \times \CC]$ where $\CC$ is a Riemann surface with possible punctures. In this way the relation to $\mathcal{N}=2$ 4d theories of class $\CS$ becomes transparent. The dualities among different 2d theories from class $\CS$ then can be understood as corresponding to different decompositions of $\CC$ into pairs of pants. From this conjecture it also follows that the 2d TQFT describing the index is a reduction of Vafa-Witten 4d TQFT \cite{Vafa:1994tf} on $\mathbb{CP}^1$. This relation may shed a light on better understanding of Vafa-Witten (VW) TQFT from categorical point of view, i.e. as functor from the category of 3-cobordisms to the category of vector spaces. So far in most of the literature the VW partition function was studied on a particular, usually closed 4-manifold. Some of the progress in understanding of VW TQFT as a functor was made in \cite{Gadde:2013sca}, where the gluing procedure of certain 4-manifolds was considered.

This interpretation is in agreement with recent calculations of the $S^2\times T^2$ index of general $\mathcal{N}=1$ 4d gauge theories \cite{Honda:2015yha,Benini:2015noa} with topological twist along $S^2$. The result has an expression that can be interpreted as the index of a $(0,2)$ 2d theory. In particular, in the case when $\CC$ is a three-punctured sphere and $\mathfrak{g}=\mathfrak{su}(3)$, by solving an integral equation we find index which agrees with the result from \cite{Gadde:2015xta}. In that paper the authors propose a $\mathcal{N}=1$ 4d gauge theory that flows in the IR to a strongly coupled 4d $\mathcal{N}=2$ $T_3$ theory with $E_6$ flavor symmetry and calculate its $S^2\times T^2$ twisted index.

However the aim of this paper is not  to focus on the 4-manifold realization of two-dimensional theories or their 4d gauge theory origin, but to study them purely from two-dimensional point of view. The relation to 4-manifolds will be explored in detail elsewhere. Let us note that currently there are almost no non-trivial results about gauge theories with $\mathcal{N}=(0,4)$ supersymmetry in the literature. Our work can be considered as a step towards improving this situation.

This paper is organized as follows. In section 2 we introduce $\mathcal{N}=(0,4)$ (and $(4,4)$) class $\CS$ theories with gauge group being a product of several copies of $SU(2)$ and study their properties. In section 3 we consider generalization to $SU(N)$. In section 4 we show that $\mathcal{N}=(0,2)$ (and  $\mathcal{N}=(2,2)$) SQCDs with $SU(N)$ gauge group and $2N$ flavors share certain similarities.

%%%%%%%%%%%%%%%%%%%%%%%%%%%%%%%%%%%%%%%%%%%%%%%%%%%%%%%%
\section{Dualities of $SU(2)$ generalized quiver}

\subsection{$SU(2)$ with 4 flavors and its crossing symmetry}
\label{section:SU2basic}

Let us consider the simplest possible two-dimensional SQCD with $\mathcal{N}=(0,4)$ supersymmetry and $SU(2)$ gauge group. Such a theory contains at least $(0,4)$ vector multiplet $(U,\Theta)$ consisting of a $(0,2)$ Vector multiplet $U$ and $(0,2)$ Fermi multiplet in adjoint representation (see appendix \ref{appendix04} for a brief review of 2d $(0,2)$ and $(0,4)$ theories). The vector multiplet contributes in total $-4$ to the 't Hooft anomaly coefficient\footnote{In appendix \ref{appendixanomalies} we define its normalization and give a basic review of 't Hooft anomalies in 2d} of $SU(2)$ gauge group. If we want to add matter fields in the fundamental representation, the minimal choice that cancels the gauge anomaly from the vector multiplet is four fundamental $(0,4)$ hypermultiplets $(\Phi,\tilde{\Phi})$. In order for the theory to be $(0,4)$ supersymmetric we also have to choose the following superpotential:
\begin{equation}
 W=\tilde{\Phi}\Theta\Phi\,.
\end{equation}

The constructed theory has $SU(4)$ flavor symmetry as well $U(1)_B$ baryonic global symmetry. The hypermultiplets form the following representation\footnote{We follow the notations of \cite{Slansky:1981yr} for group representations throughout the paper.} w.r.t. $SU(2)\times SU(4)\times U(1)_B$:

\begin{equation}
 \mathbf{(2,4)_{+1}+(\bar{2},\bar{4})_{-1}}\,.
\end{equation}

As we will show later in the paper, this theory shares a lot of properties with the analogous 4d $\mathcal{N}=2$ theory, which was studied in great detail already in \cite{Seiberg:1994aj}. In particular, the flavor symmetry is enhanced to $SO(8)$ at the classical level. This can be easily seen from the fact that for $SU(2)$ we have $\mathbf{2=\bar{2}}$ and $\mathbf{4_{+1}+\bar{4}_{-1}=8_v}$ of $SO(8)\supset SU(4)\times U(1)$. Since the $(0,4)$ vector multiplet does not have any scalar fields, the theory has no Coulomb branch. The Higgs branch is defined by the triplet of $D$-term conditions and can be represented as the $\mathbb{H}^{8}\hkquot SU(2)$ hyper-K\"ahler quotient. It is the same as the Higgs branch of 4d $\CN=2$ theory and does not acquire any quantum corrections. The scalar fields of $(\Phi,\tilde\Phi)$ transform in representation $(2,1)$ of $SU(2)_{R}^-\times SU(2)_{R}^+$ of UV R-symmetry group. Following the arguments of \cite{Witten:1997yu} one then expects $SU(2)_{R}^+$, under which the scalars parametrizing the Higgs branch transform trivially, to be the $SU(2)_R$ R-symmetry of the small $\mathcal{N}=4$ superconformal algebra (SCA) in the right-moving sector of the IR SCFT.

The hyper-K\"ahler dimension of the Higgs branch is $8-3=5$ which is the same as twice the 't Hooft anomaly coefficient of $SU(2)_{R}^+$ or, equivalently, the level of the affine $\widehat{SU(2)}$ R-symmetry algebra in the IR SCFT. It follows that the central charges of the theory are
\begin{equation}
 c_R=6\cdot 5=30,\qquad c_L=20
\end{equation}
where we also used the fact that $c_L-c_R$ equals to the gravitational anomaly which is easily calculated in the UV as the difference between the numbers of left and right moving complex fermions.

We would like to conjecture that the spectrum of the (0,4) SCFT at the IR fixed point is also invariant under the action of $SO(8)$ triality which permutes vector representation $\mathbf{8_v}$ and two spinor representations $\mathbf{8_s}$ and $\mathbf{8_c}$. Unlike in the $\mathcal{N}=2$ 4d case, we do not need to accompany the triality action with a transformation of the the gauge coupling because it is not marginal in 2d. There are also no other apparent exactly marginal deformations of the (0,4) $SU(2)$ gauge theory in the UV, since there is no FI parameter for $SU(2)$ gauge group and the superpotential is completely fixed by $(0,4)$ supersymmetry.

As in the 4d $\mathcal{N}=2$ case \cite{Gaiotto:2009we}, the symmetry under triality can be reformulated in a different way, which will be useful later in the paper when we consider more general quiver theories. Let us define 2d $\mathcal{N}=(0,4)$ theory $T_2^{(0,4)}$ analogous to 4d $\mathcal{N}=2$ theory $T_2$ as the theory of free $(0,2)$ chiral multiplets (``half-hypers'') in the tri-fundamental representation $\mathbf{(2,2,2)}$ of $SU(2)^3$ flavor symmetry. In quiver notation we will depict this theory as a triangle with 3 external legs corresponding to $SU(2)$ flavor groups (see Fig. \ref{fig:SU2-quiver-notations}a). As usual, we will represent $SU(N)$ vector multiplet as a circle (see Fig. \ref{fig:SU2-quiver-notations}b). Then the $(0,4)$ $SU(2)$ gauge theory with 4 flavors can be represented as two copies of $T_2^{(0,4)}$ glued together by a $SU(2)$ vector multiplet gauging the diagonal subgroup of $SU(2)\times SU(2)$ (see Fig. \ref{fig:SU2-schannel}).

\begin{figure}[ht]
\centering
\includegraphics[scale=1]{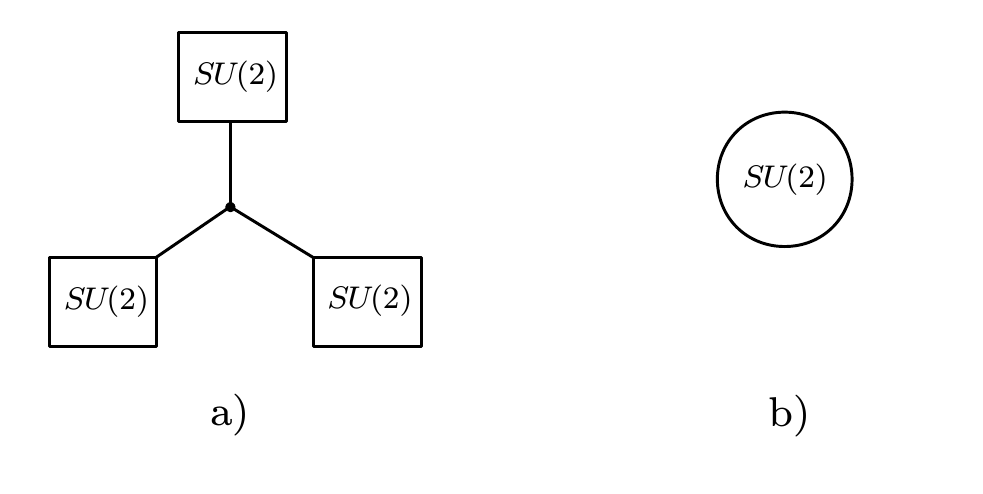}
\caption{The quiver notations for: a) theory $T_2^{(0,4)}$ of 8 chiral multiplets in tri-fundamental representation of $SU(2)^3$ flavor symmetry, b) $(0,4)$ $SU(2)$ vector multiplet.}
\label{fig:SU2-quiver-notations}
\end{figure}

\begin{figure}[ht]
\centering
\includegraphics[scale=1]{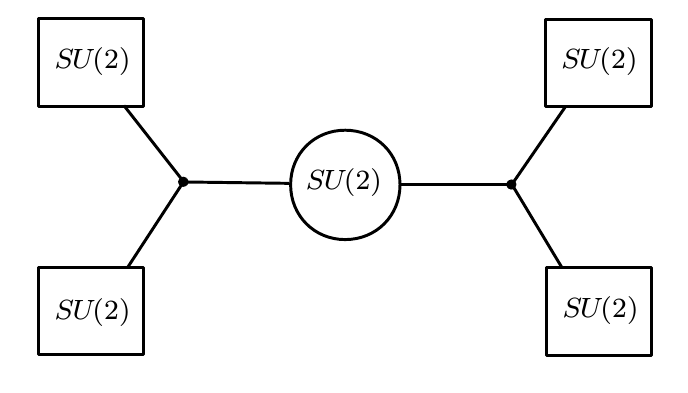}
\caption{The quiver notation for the theory obtained by gauging the diagonal subgroup of two $SU(2)$ flavor symmetries from two different copies of $T_2^{(0,4)}$ with $(0,4)$ $SU(2)$ vector multiplet.}
\label{fig:SU2-schannel}
\end{figure}

The flavor symmetry of the resulting theory is $SU(2)^4$ which is enhanced to $SO(8)$. The chiral fields in the hypermultiplets form the following representation of the flavor group:
\begin{equation}
 \mathbf{8_v=(2,2,1,1)+(1,1,2,2)}\,.
\end{equation}
Two spinor representations of $SO(8)$ decompose as:
\begin{equation}
\begin{array}{c}
\mathbf{8_s=(1,2,1,2)+(2,1,2,1)}\,, \\
\mathbf{8_c=(1,2,2,1)+(2,1,1,2)}\,.
\end{array}
\end{equation}
Therefore the invariance of the spectrum under $SO(8)$ triality is equivalent to the symmetry under permutations of $SU(2)$ factors in $SU(2)^4$ flavor symmetry, or crossing symmetry of the quiver diagram (see Fig. \ref{fig:SU2-crossing}).

\begin{figure}[ht]
\centering
\includegraphics[scale=0.75]{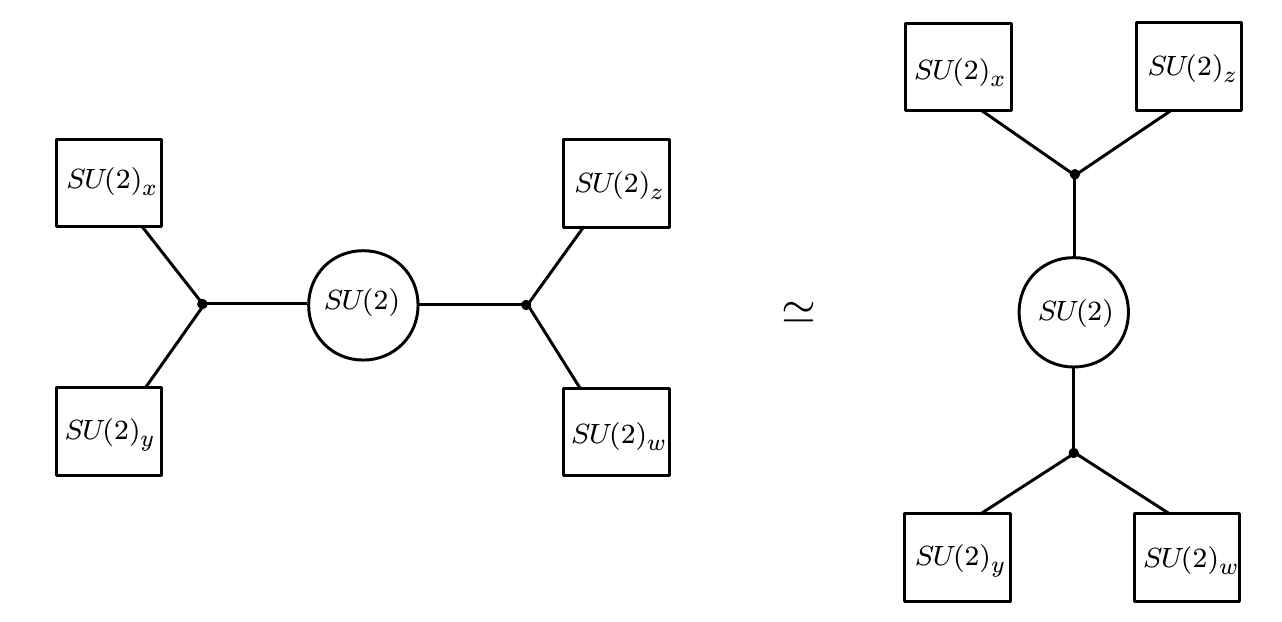}
\caption{The symmetry under exchange of $SU(2)$ factors in the flavor symmetry of the theory can be interpreted as the crossing symmetry of the quiver diagram. The letters $x,y,z,w$ used to distinguish various $SU(2)$ factors and later in the text denote the corresponding $SU(2)$ flavor fugacities in the elliptic genus.}
\label{fig:SU2-crossing}
\end{figure}

The statement can be checked by calculating the 2d superconformal index (also known as flavored elliptic genus\footnote{In this paper we are using ``superconformal index'' and ``elliptic genus'' interchangeably.}) of the theory \cite{Gadde:2013wq,Benini:2013nda,Gadde:2013dda,Benini:2013xpa}. The NS-NS index of the theory at hand can be calculated as the following integral (see appendix \ref{appendixindex} for a review of the superconformal index in 2d):
\begin{equation}
 \CI^{(0,4)}_{\schannel}(x,y,z,w;v;q)=\half\int\limits_\text{JK} \frac{d\xi}{2\pi i \xi}\, \CI^{(0,4)}_{T_2}(x,y,\xi;v;q)\,\CI^{(0,4)}_{V,SU(2)}(\xi;v,q)\,\CI^{(0,4)}_{T_2}(1/\xi,z,w;v,q) \ ,
 \label{I04SO8int}
\end{equation}
taken over a certain contour ``JK'' which corresponds to taking a sum of Jeffrey-Kirwan residues. For example, in the case of rank one gauge group the contour encircles only the poles coming from scalar fields with positive (or, equivalently, negative) charges w.r.t. the Cartan $U(1)$. The factors entering the integrand are
\begin{equation}
 \CI^{(0,4)}_{T_2}(x,y,z;v;q)\equiv \frac{1}{\theta(v\,x^\pm y^\pm z^\pm)}\,,
 \label{index-T2}
\end{equation}
the index of $T_2^{(0,4)}$ (tri-fundamental half-hyper) where $x, y$ and $z$ denote the fugacities corresponding to $SU(2)^3$ flavor symmetries, and
\begin{equation}
 \CI^{(0,4)}_{V,SU(2)}(\xi;v,q)\equiv (q;q)^2\theta(q/v^2){\theta(q\,\xi^{\pm 2}/v^{2})\theta(\xi^{\pm 2})}\,,
 \label{index-V2}
\end{equation}
the index of $(0,4)$ $SU(2)$ vector multiplet. Here and throughout the paper we use the common notation:
\begin{equation}
 f(x^\pm)\equiv f(x)f(x^{-1}).
\end{equation}
The fugacity $v$ corresponds to $U(1)_v$ global symmetry -- anti-diagonal Cartan of $SU(2)_{R}^-\times SU(2)_{R}^+$ R-symmetry which commutes with the supercharges used to calculate the index. The index can be understood as the $(0,2)$ index where the IR $U(1)_R$ R-symmetry is chosen as the Cartan of $SU(2)_{R}^+$ and $U(1)_v$ plays the role of a flavor symmetry. See appendix \ref{appendixindex} for details. Since the theory has only the Higgs branch, we expect the elliptic genus to coincide with geometrically defined $(0,2)$ equivariant elliptic genus \cite{Kawai:1994np} of the Higgs branch manifold $X=\mathbb{H}^{8}\hkquot SU(2)$ with empty vector bundle of left-moving fermions:
\begin{equation}
 \CI^{(0,4)}_{\schannel}=\int_X\det\frac{F_T}{\theta(e^{F_T})}.
 \label{EG-nobundle}
\end{equation}
where $F_T$ is the curvature on the tangent bundle $TX$.

The integral (\ref{I04SO8int}) can be calculated explicitly by residues. The result contains 8 terms, each of which has the form of ratio of products of theta functions. To make the formula simpler let us denote the collection of $SU(2)^4$ fugacities $(x,y,z,w)$ as $\bx$ which can be understood as the element of the maximal torus of $SO(8)$. In the limit $q\rightarrow 0$ the index becomes the same as Hilbert series of $X$ calculated in \cite{Benvenuti:2010pq,Hanany:2010qu}, which can be written as
\be
 \CI^{(0, 4)}_\schannel(\bx; v; q\to0) = \sum_{k=0}^{\infty} \chi^{SO(8)}_{k \theta }(\bx) v^{2k} = \mathbf{1}+\mathbf{28}\,v^2+\mathbf{300}\,v^4+\mathbf{1925}\,v^6+\ldots \ ,
\ee
where $\theta$ denotes the highest root of $SO(8)$, and $\chi_{k \theta}$ is the character for the Dynkin label given by $k \theta$. For the sake of simplicity we later denote characters by the dimension of the corresponding representations. When $k=1$, this is the character of the adjoint representation.
This is the same as the Hilbert series of the (centered) one $SO(8)$ instanton moduli space \cite{Garfinkle,VinbergPopov}, where the first equality also holds for arbitrary simple gauge group $G$. The Hilbert series of (centered) 1-instanton moduli space can also be written as a sum over root vectors \cite{Keller:2011ek,Keller:2012da} as
\be
\textrm{HS}_{G}(\mu, \phi) =  \sum_{\g \in \Delta_l} \frac{e^{(h^\vee - 1)\g \cdot \phi /2}}{(1-e^{\mu+\g \cdot \phi})(e^{\g \cdot \phi/2} - e^{-\g \cdot \phi /2})\prod_{\g^\vee \cdot \a =1} (e^{\a \cdot \phi/2} - e^{-\a \cdot \phi/2})} \ ,
\ee
where $h^\vee$ is the dual Coxeter number of $G$, and $\Delta_l$ is the set of long roots and $\phi$ is an element in the Cartan. We identify $v = e^{\mu/2}$, $\bx = e^{\phi}$. There are poles at $v^2 \bx^\g = 1$ for $\g \in \Delta_l$.

One can show that the index has the following structure:
\begin{equation}
 \CI^{(0,4)}_{\schannel}(\bx;v;q)=\frac{\tilde{\CI}^{(0,4)}_{\schannel}(\bx;v;q)}{\prod\limits_{\lambda \in\mathbf{28}}\theta(v^2\,\bx^\lambda)} \ ,
 \label{I04SO8}
\end{equation}
where $\mathbf{28}$ denotes adjoint representation of $SO(8)$ and the function $\tilde{\CI}^{(0,4)}_{\schannel}(\bx;v,q)$ is regular in $\bx$. The denominator of (\ref{I04SO8}) can be understood as the contribution of gauge invariant mesons constructed from bilinear combinations of the chiral fields because
\begin{equation}
 \mathrm{Sym}^2\,\mathbf{(2,8_v)}=\mathbf{(1,28)}+\mathbf{(3,1+35_v)} \ ,
\end{equation}
where two numbers in each pair denote the representations w.r.t. $SU(2)$ gauge and $SO(8)$ flavor group respectively. The complex dimension of the Higgs branch is 10 and the numerator of (\ref{I04SO8}) formally corresponds to additional conditions on these 28 mesons from D-term constraints (cf. \cite{Benvenuti:2010pq,Hanany:2010qu}).

The index has the following expansion w.r.t. $q$ and $v$ written in terms of $SO(8)$ characters:
\be
\begin{split}
 {\CI}^{(0,4)}_{\schannel}(\bx;v;q)= &\left(\mathbf{1}+\mathbf{28}\,v^2+\mathbf{300}\,v^4+\mathbf{1925}\,v^6+\ldots\right)\\
& \qquad +\left(\mathbf{(1+28)}+(2\cdot\mathbf{28}+\mathbf{300+350})v^2+\ldots\right)\,q
 +\ldots
 \label{I04SO8series}
\end{split}
\ee
One can see that only $SO(8)$ triality invariant representations appear in the index.

The crossing symmetry of the index (\ref{I04SO8}) can be proven explicitly, not just term by term in $q$ and $v$ expansion. To do this let us consider the difference between indices that differ by a non-trivial transposition of two $SU(2)$ flavor fugacities:
\begin{equation}
 \CI_\Delta^{(0,4)}(x,y,z,w;v;q)\equiv {\CI}^{(0,4)}_{\schannel}(x,y,z,w;v;q)-{\CI}^{(0,4)}_{\schannel}(x,z,y,w;v;q).
 \label{I04Delta}
\end{equation}
Using the explicit expression for the index it is easy to show that $\CI_\Delta^{(0,4)}(x,y,z,w;v;q)$ has no poles in variables $(x,y,z,w)$ (i.e. the residues from two terms in (\ref{I04Delta}) cancel each other). The theory has anomaly coefficient $2$ w.r.t. each $SU(2)$ flavor symmetry factor. Therefore if we further define
\begin{equation}
 \tilde{\CI}_\Delta^{(0,4)}(x,y,z,w;v;q)\equiv \CI_\Delta^{(0,4)}(x,y,z,w;v;q)\cdot\left(\theta(x^\pm)\theta(y^\pm)\theta(z^\pm)\theta(w^\pm)\right)^4
\end{equation}
it will be a function elliptic in $(x,y,z,w)$ (i.e. invariant under the shifts $x\rightarrow qx$, $y\rightarrow qy$, etc.) and with no poles. It follows that $\tilde{\CI}_\Delta^{(0,4)}(x,y,z,w;v;q)$ should be constant in $x,y,z,w$. And since $\CI_\Delta^{(0,4)}(x,y,z,w;v;q)$ has no pole at $x=1$ this constant should be zero. This proves the crossing symmetry property of the index ${\CI}^{(0,4)}_{\schannel}(x,y,z,w;v;q)$, namely:
\begin{equation}
  {\CI}^{(0,4)}_{\schannel}(x,y,z,w;v;q)-{\CI}^{(0,4)}_{\schannel}(x,z,y,w;v;q)=0\,.
  \label{index-crossing}
\end{equation}

The triality outer-automorphism of $SO(8)$ can be understood as the Weyl group action of $F_4$ if we embed $SO(8)\subset F_4$. This means that the series (\ref{I04SO8series}) can be formally rewritten in terms of characters of $F_4$ representations:
\be
\begin{split}
 {\CI}^{(0,4)}_{\schannel}(\bx;v;q) =&\left(\mathbf{1}+(\mathbf{52-26}+2\cdot\mathbf{1})\,v^2+\mathbf{300}\,v^4+\ldots\right)\\
 & \qquad \qquad +\left((\mathbf{52-26}+3\cdot\mathbf{1})+\ldots\right) q
 +\ldots
\end{split}
\ee
The index of the analogous $\mathcal{N}=2$ 4d theory has similar property \cite{Gadde:2009kb}. As in the 4d case, it does not follow that the global symmetry actually enhances from $SO(8)$ to $F_4$ in the IR SCFT because there is no conserved current of $F_4$.

\subsection{Dualities of quiver theories and the TQFT structure of the index} \label{subsec:SU2TQFT}

\subsubsection{Elliptic genus and 2d TQFT} \label{section-TQFT}

Similarly to the 4d $\mathcal{N}=2$ case \cite{Gadde:2009kb}, the crossing symmetry of the index (\ref{I04SO8int}) indicates that (\ref{index-T2}) and (\ref{index-V2}) can be used to define a 2d TQFT. Namely, let us define the Hilbert space of the 2d TQFT associated to a circle as the following space of meromorphic functions\footnote{This space can be understood as the space of meromorphic sections of $\CL^{-4}\rightarrow \CM_\text{flat}(T^2_\tau,SU(2))$, see appendix \ref{appendixanomalies} for details. It would be interesting to check explicitly if this is the Hilbert space of VW TQFT associated to $\mathbb{CP}^1\times S^1$, or, equivalently, the BPS sector of the Hilbert space of $T_{\mathfrak{su}(2)}[\mathbb{CP}^1\times S^1]$ quantized on $T^2_\tau$.}:
\begin{equation}
 \CH_{S^1}^{(0,4)}=\{ f:\mathbb{C}^* \rightarrow \mathbb{C}\,|\, f(x)=f(1/x),\,f(qx)=q^{4}x^{8}f(x)\}\,.
 \label{TQFT-Hilbert-space}
\end{equation}
Then define the basic building blocks of 2d TQFT:

\begin{equation}
 \begin{tabular}{|c|c|}
 \hline
$\vcenter{\hbox{\includegraphics[scale=0.7]{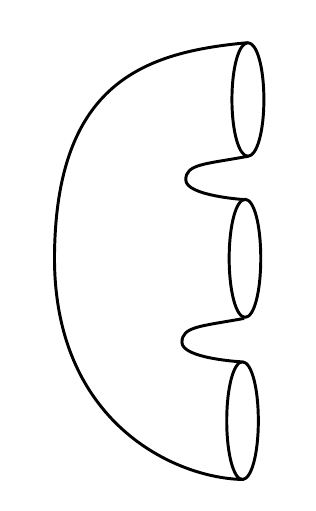}}}$ &
$
\begin{array}{rrcl}
  C:&\, \mathbb{C}& \longrightarrow & \CH_{S^1}^{(0,4)}\otimes \CH_{S^1}^{(0,4)} \otimes \CH_{S^1}^{(0,4)}\\
  \\
 & 1 & \longmapsto&\CI^{(0,4)}_{T_2}(x,y,z;v;q)
 \end{array}
$
\\
\hline
$\vcenter{\hbox{\includegraphics[scale=0.7]{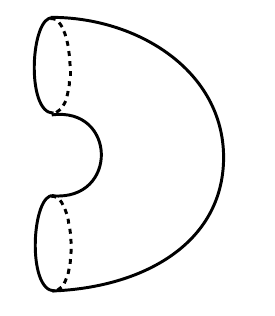}}}$ &
$
\begin{array}{rrcl}
  \eta:&\,\CH_{S^1}^{(0,4)}\otimes \CH_{S^1}^{(0,4)}&\longrightarrow & \mathbb{C}\\
  \\
 & f(x,y) & \longmapsto& \half\int\limits_\text{JK}\frac{d\xi}{2 \pi i \xi}\,\CI^{(0,4)}_{V,SU(2)}(\xi;v;q)f(\xi,\xi)
 \end{array}
$\\
\hline
\end{tabular}
\label{TQFT-blocks}
\end{equation}
Note that the last property in  (\ref{TQFT-Hilbert-space}) is required for the integrand in the definition of $\eta$  to be elliptic. Using $\eta$ and $C$ one can define a commutative product $\mu$ on $ \CH_{S^1}^{(0,4)}$:
\begin{equation}
 \begin{tabular}{|c|c|}
 \hline
$\vcenter{\hbox{\includegraphics[scale=0.5]{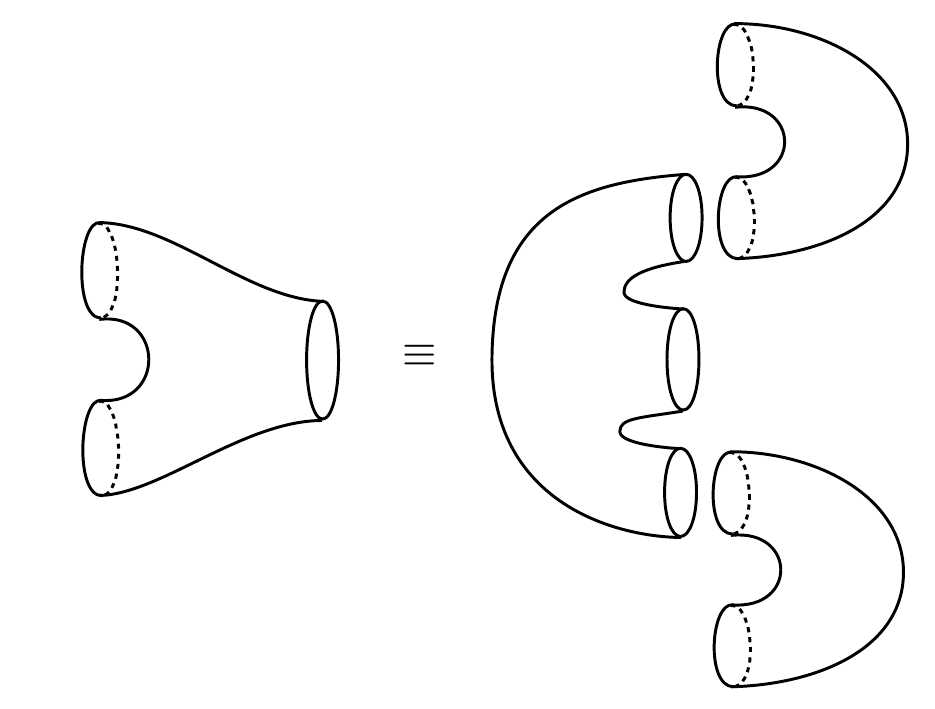}}}$ &
$
\begin{array}{c}
\mu\equiv (\eta\otimes\mathrm{id}\otimes \eta)\circ(\mathrm{id}\otimes C \otimes \mathrm{id})\\
\\
\mu:\,  \CH_{S^1}^{(0,4)}\otimes \CH_{S^1}^{(0,4)} \longrightarrow \CH_{S^1}^{(0,4)}
\end{array}
  $
\\
\hline
\end{tabular}
\end{equation}
where $\mathrm{id}:\CH_{S^1}^{(0,4)} \longrightarrow \CH_{S^1}^{(0,4)}$ is the identity map. The crossing symmetry property (\ref{index-crossing}) of the index is then equivalent to the associativity of $\mu$ which can be formulated in the following way:
\begin{equation}
 \begin{tabular}{|c|c|}
 \hline
$\vcenter{\hbox{\includegraphics[scale=0.5]{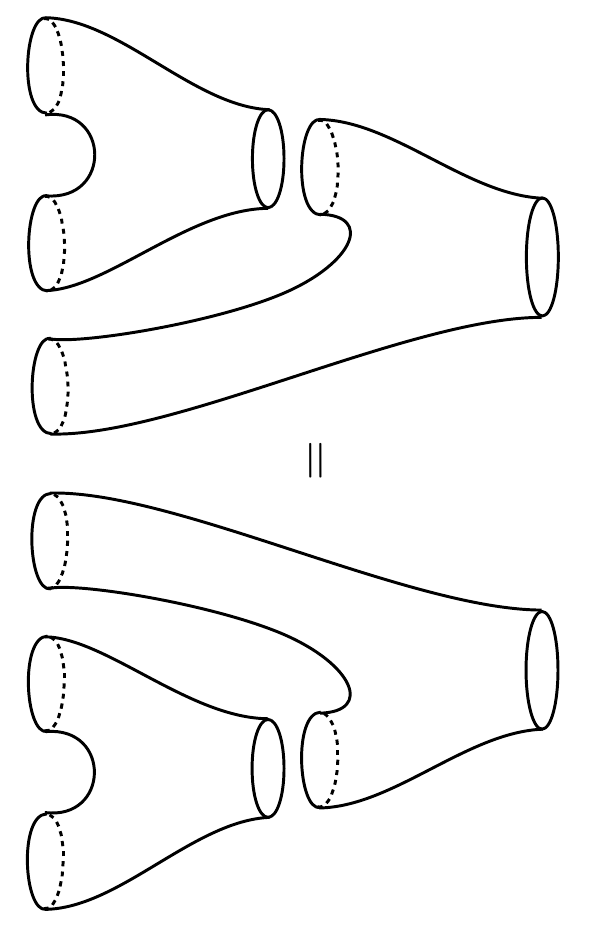}}}$ &
$
\begin{array}{c}
\mu \circ (\mu\otimes\mathrm{id})\\
\text{\rotatebox{90}{$=$}}
\\
\mu \circ (\mathrm{id}\otimes\mu)
\end{array}
  $
\\
\hline
\end{tabular}
\end{equation}

\subsubsection{Dualities between generalized quiver theories}

As in \cite{Gaiotto:2009we}, the crossing symmetry property of the IR spectrum of the theory depicted in Fig. \ref{fig:SU2-schannel} can be used to deduce IR dualities between various theories constructed from the basic building blocks in Fig. \ref{fig:SU2-quiver-notations}.

For example, consider a theory defined by the quiver in the l.h.s. of Fig. \ref{fig:tree-6flavor-duality}. Applying the crossing symmetry transformation in Fig. \ref{fig:SU2-crossing} to the middle part we get a different theory corresponding to the quiver in the r.h.s. of Fig. \ref{fig:tree-6flavor-duality}. From the point of view of 2d TQFT defined above the index of the theory is the partition function (which can be understood as an element of $\in (\CH_{S^1}^{(0,4)})^{\otimes 6}$) of the sphere with 6 punctures.
\begin{figure}[ht]
\centering
\includegraphics[scale=1]{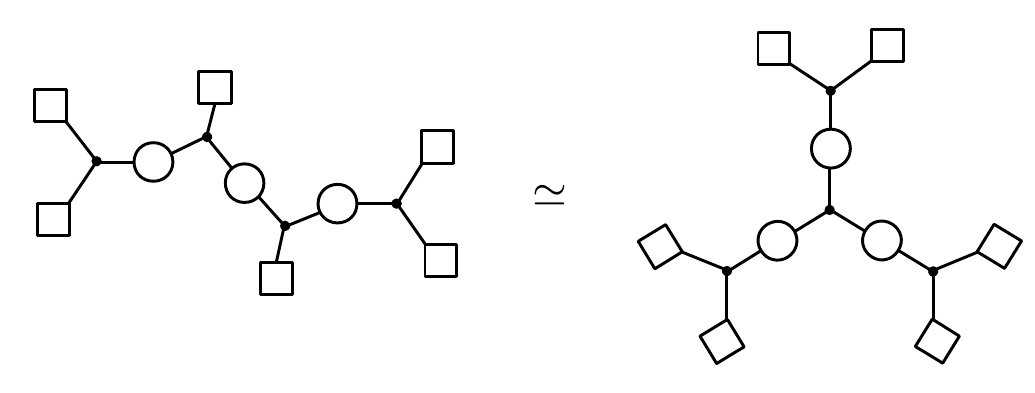}
\caption{Duality between two different $(0,4)$ theories with $SU(2)^3$ gauge group and $SU(2)^6$ flavor symmetry. For the sake of simplicity we suppress $SU(2)$ inscribed inside squares and circles of the quivers.}
\label{fig:tree-6flavor-duality}
\end{figure}
The first theory is a linear quiver gauge theory, and the second one contains trifundamental hypermultiplet coupled to three $SU(2)$ gauge groups.

One can consider another example of duality between two distinct 2d (0,4) theories that follows from the crossing symmetry as depicted in Fig. \ref{fig:genus2-duality}. The index of such theory can be understood as the 2d TQFT partition function of a genus two Riemann surface.

\begin{figure}[ht]
\centering
\includegraphics[scale=1]{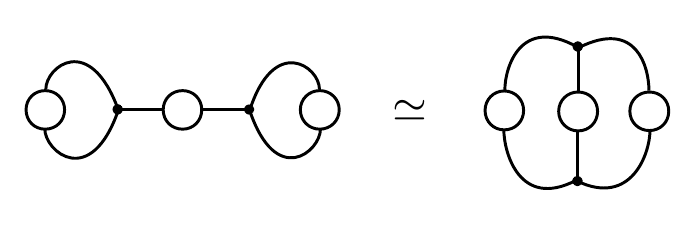}
\caption{Duality between two different $(0,4)$ theories with $SU(2)^3$ gauge group.}
\label{fig:genus2-duality}
\end{figure}

However, in the case when quiver has loops the physics is a little more complicated because the gauge group is not completely broken. Consider a theory corrsponding to a quiver with $g$ loops and $n$ external legs. In terms of 2d TQFT the index is the partition function of a genus $g$ Riemann surface with $n$ punctures $\CC_{g,n}$. The theory has $3g-3+n$ copies of $SU(2)$ vector multiplet and $2g-2+n$ copies of $SU(2)$ trifundamental chiral multiplet $T_2^{(0,4)}$. The resulting theory has $SU(2)^n$ flavor symmetry. When $g>0$ a $U(1)^g$ part of the gauge symmetry remains unbroken for general expectation values of hyper-multiplets. Each unbroken $U(1)$ factor is the the diagonal maximal torus of the gauge group $\prod_{i \in \textrm{loop}} SU(2)_i$ associated to the loop in the quiver. Following the authors of \cite{Hanany:2010qu} in this case we will refer to the moduli space $X$ parametrized by massless gauge-invariant combinations of hypermultiplets as Kibble branch. The naive counting of its dimensions -- as $n_h-n_v$ where $n_{h,v}$ are the numbers of hyper- and vector multiplets of the theory respectively -- does not work in this case. The reason is that $SU(2)^{3g-3+n}$ does not act freely on $\mathbb{H}^{4(2g-2+n)}$ space of hyper-multiplets. The mismatch of the quaternionic dimension is given by $g$, the rank of the unbroken part of the gauge group. It follows that the Kibble branch CFT should have the following central charges:
\begin{equation}
 c_R=6(n_h-n_v+g)=6(n+1),\qquad c_L=4(n+1)+2g,
\end{equation}
where we calculated $c_L$ from the gravitational anomaly. Let us note that $c_L>2c_R/3$ when $g>0$. This is beacuse, unlike in the case when quiver has no loops, unbroken directions of the gauge group give rise to a non-empty complex rank $2g$ bundle $E$ of left-moving Fermions, the only remnant of the usual Coulomb branch that would appear for $(4,4)$ theories. Again, as for the basic theory in section \ref{section:SU2basic}, at least for the large values of scalar fields, we expect the IR SCFT to have a sigma-model description in terms of target space $X\cong \mathbb{H}^{4(2g-2+n)}\hkquot SU(2)^{3g-3+n}$, where $(0,2)$ chiral multiplets play the role of complex coordinates, and a holomorphic vector bundle\footnote{In general the dimension of the fiber (i.e. the number of massless left-moving fermions) can depend on a point in the moduli space, $E$ then should be considered as a sheaf.} of (0,2) Fermi multiplets $E\rightarrow X$. The index then has the meaning of the following equivariant characteristic class \cite{Kawai:1994np}:
\begin{equation}
 \CI^{(0,4)}=\int_X\det\frac{F_T}{\theta(e^{F_T})}\cdot \det\,\theta(e^{F_E})
\end{equation}
where $F_E$ and $F_T$ are the curvatures on $E$ and $TX$ respectively. In the next section we consider example with $g=1$ and $n=1$ in detail.

Let us note that the relation $c_R=6\cdot(2k_{SU(2)_R^+})\equiv 6(n_h-n_v)$ between the right-moving central charge and the anomaly of $SU(2)_R^+$ UV R-symmetry does not work when $g>0$ for the following reason. In the sigma-model description $SU(2)_R^+$ now acts not only on the right-moving fermions living in the tangent bundle of the Kibble branch, but also on the left-moving fermions in the complex rank $2g$ vector bundle $E$. Therefore, similarly to what happens on the Coulomb branch of $(4,4)$ theories \cite{Witten:1997yu}, we expect that in IR SCFT $SU(2)_R^+$ splits into two symmetries, one is left-moving global symmetry $SU(2)$ affine symmetry with level $g$, and the other is right-moving $SU(2)$ affine R-symmetry with level $(n_h-n_v+g)$, which is in agreement with the value of $c_R$. In the UV we only see the diagonal of these two symmetries, $SU(2)_R^+$, with anomaly coefficient being half the difference of affine algebras levels, $(n_h-n_v)/2$.

\subsubsection{Duality to a Landau-Ginzburg model}

\begin{figure}[ht]
\centering
\includegraphics[scale=1]{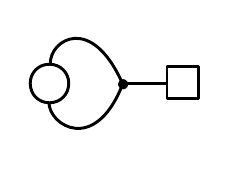}
\caption{The quiver of $(0,4)$ theory with $SU(2)$ vector multiplet $(U,\Theta)$ and a hyper multiplet $(\Phi,\tilde{\Phi})$ in adjoint representation.}
\label{fig:torus-1puncture}
\end{figure}

Consider the theory associated to the quiver in Fig. \ref{fig:torus-1puncture}.
One can show that the index of this theory satisfies the following identity:

\begin{equation}
\begin{split}
 \half\int\limits_\text{JK}\frac{d\xi}{2\pi i \xi}\,\CI^{(0,4)}_{V,SU(2)}(\xi;v;q)\,&\CI^{(0,4)}_{T_2}(\xi,\xi,x;v;q)=\\
&= \frac{1}{\theta(v/x)\theta(vx)}\cdot \frac{\theta(q/v^4)}{\theta(v^2)\theta(v^2/x^2)\theta(v^2x^2)}\,\cdot \theta(v/x)\theta(vx)
\label{genus-one-LG}
\end{split}
\end{equation}
where we explicitly factored out the contribution from decoupled chiral fields $(\Tr\,\Phi,\Tr\,\tilde\Phi)$ spanning $\mathbb{C}^2$. The second factor in right hand side can be understood as the index of the $(0,2)$ Landau-Ginzburg model with three $(0,2)$ chiral multiplets $\Phi_{1,2,3}$, one Fermi multiplet $\Gamma$ and the superpotential
\begin{equation}
 W=\Gamma (\Phi_1\Phi_2-\Phi_3^2)\,.
 \label{LG-potential}
\end{equation}
The superpotential (\ref{LG-potential}) implies the condition
\begin{equation}
 \Phi_1\Phi_2-\Phi_3^2=0 \label{conifold-condition}
\end{equation}
which is the equation describing an embedding of $\mathbb{C}^2/\mathbb{Z}_2$ into $\mathbb{C}^3$. The chiral fields $\Phi_i$ can be mapped to the following gauge invariant operators in the chiral ring of the original gauge theory:
\begin{equation}
 \begin{array}{rl}
          \Phi_1 & =\Tr\,\Phi^2\,,\\
          \Phi_2 & =\Tr\,\tilde\Phi^2\,,\\
          \Phi_3 & =\Tr\,\Phi\tilde{\Phi}\,.
         \end{array}
\end{equation}
Then the condition (\ref{conifold-condition}) follows from the condition $[\Phi,\tilde\Phi]=0$ imposed by the superpotential associated to $\Theta$.

The first two factors in the right hand side of (\ref{genus-one-LG}) describe $(0,2)$ chiral fields spanning the Kibble branch of the theory, $X=\mathbb{C}^8\hkquot SU(2)\cong \mathbb{C}^2\times \mathbb{C}^2/\mathbb{Z}_2$, and in the limit $q\rightarrow 0$ they reproduce its Hilbert series \cite{Hanany:2010qu}. The last factor in $(\ref{genus-one-LG})$ is the contribution of a complex rank two holomorphic vector bundle $E\rightarrow X$ of left-moving fermions. It appears in this case because the gauge group is not completely broken (contrary to the case when a quiver does not have any loops, the gauge group is completely broken and $E$ is empty). In terms of the original gauge theory the fibers of the bundle $E$ are generated by massless gauge invariant Fermi multiplets $\Tr\Lambda \Phi$ and $\Tr\Lambda \tilde\Phi$, where is $\Lambda$ is the $(0,2)$ field strength Fermi multiplet constructed from the vector multiplet $U$. From the dimensions of the target space and the bundle $E$ we conclude that
\begin{equation}
 \begin{array}{rl}
          c_R & =12\,,\\
          c_L & =10.
         \end{array}
\end{equation}

Let us note that in this particular case ($g=1$, $n=1$) if we throw away the decoupled hypermultiplet $(\Tr\,\Phi,\Tr\,\tilde\Phi)$, the supersymmetry actually enhances to $(4,4)$ and we expect to have a $(4,4)$ sigma model with $\tilde{X}=\mathbb{C}^2/\mathbb{Z}_2$ target space. It follows that $E$ is isomorphic to the tangent bundle $T\tilde{X}$. The resulting $(4,4)$ SCFT has central charges $\tilde{c}_L=\tilde{c}_R=6$.

\subsection{$\CN=(4, 4)$ theories}

Most of the statements about $(0,4)$ theories made in previous sections also hold for their $(4,4)$ counterparts. The main difference is that now the theory also has a Coulomb branch (and in the case of $SU(2)$ gauge group there is no FI parameter to switch it off) that receives quantum corrections.

Let us replace all $(0,4)$ hypermultiplets by $(4,4)$ multiplets and $(0,4)$ vector multiplets by $(4,4)$ vector multiplets in quiver notations (\ref{fig:SU2-quiver-notations}). Then $(4,4)$ analogs of (\ref{index-T2}) and (\ref{index-V2}) read
\begin{equation}
 \CI^{(4,4)}_{T_2}(x,y,z;v;q)\equiv \frac{\theta(q^{1/2}u\,x^\pm y^\pm z^\pm)}{\theta(v\,x^\pm y^\pm z^\pm)}\,,
 \label{index-T2-44}
\end{equation}
\begin{equation}
 \CI^{(4,4)}_{V,SU(2)}(\xi;v,q)\equiv \frac{\theta(q/v^2)}{{\theta(q^{1/2}uv^{-1})\theta(q^{1/2}u^{-1}v^{-1})}}\frac{\theta(q\,\xi^{\pm2}/v^{2})\theta(\xi^{\pm2})(q;q)^2}{\theta(q^{1/2}uv^{-1}\,\xi^{\pm2})\theta(q^{1/2}u^{-1}v^{-1}\,\xi^{\pm2})}
 \label{index-V2-44}
\end{equation}
where $u$ is the fugacity for the additional $SU(2)$ R-symmetry of $\CN=(4,4)$ UV superalgebra. In particular, the index of the $(4,4)$ theory corresponding to the quiver in Fig. \ref{fig:SU2-schannel},
\begin{equation}
 \CI^{(4,4)}_{\schannel}(x,y,z,w;v;q)=\half\int\limits_\text{JK} \frac{d\xi}{2\pi i\xi}\, \CI^{(4,4)}_{T_2}(x,y,\xi;v;q)\,\CI^{(4,4)}_{V,SU(2)}(\xi;v,q)\,\CI^{(4,4)}_{T_2}(1/\xi,z,w;v,q),
 \label{I44SO8int}
\end{equation}
also satisfies the crossing symmetry property
\begin{equation}
  {\CI}^{(4,4)}_{\schannel}(x,y,z,w;v;q)-{\CI}^{(4,4)}_{\schannel}(x,z,y,w;v;q)=0
  \label{index-44-crossing}
\end{equation}
which means that similarly to the $(0,4)$ case one can use (\ref{index-T2-44}) and (\ref{index-V2-44}) to define a 2d TQFT.

%%%%%%%%%%%%%%%%%%%%%%%%%%%%%%%%%%%%%%%%%%%%%%%%%%%%%%%%%%%%%%
\section{$SU(N)$ theories}
In this section we study $\CN=(0, 4)$ quiver theories with $SU(N)$ gauge group. In section \ref{subsec:SUN}, we consider a $SU(N)$ version of the SQCD with $\CN=(0, 4)$ and $\CN=(4, 4)$ supersymmetry. We find a crossing-symmetry of the elliptic genus for this case as well. In section \ref{subsec:TN}, we argue for the existence of 2d analog of the $T_N$ theory.

\subsection{$SU(N)$ with $2N$ flavors and its crossing symmetry} \label{subsec:SUN}

Let us consider the $\CN=(0, 4)$ $SU(N_c)$ gauge theory with $N_f$ fundamental hypermultiplets. The following table lists the $(0,2)$ superfields of the theory and their charges w.r.t. various symmetry groups:
\be
\begin{tabular}{c|ccc|cc}
 & $SU(N_c)$ & $SU(N_f)$ & $U(1)_B$ & $U(1)_R^- \times U(1)_R^+$ & $U(1)_v$ \\
 \hline
 $\Theta$ & adj & $1$ & $0$ & $(-1, 1)$ & $-2$\\
 $\Phi$ & $N_c$ & $N_f$ & $1$ & $(1, 0)$ & $1$ \\
 $\tilde{\Phi}$ & $\bar{N}_c$ & $\bar{N}_f$& $-1$ & $(1, 0)$ & $1$ \\
\end{tabular}
\label{SUN-table}
\ee
where $U(1)_R^- \times U(1)_R^+ \subset SU(2)_R^- \times SU(2)_R^+$, $U(1)_v = U(1)_R^- - U(1)_R^+$, and $U(1)_B$ is the barionic $U(1)$ symmetry. The theory has the following superpotential
\be
 W = \tilde{\Phi} \Theta \Phi \ ,
\ee
necessary to ensure $\CN=(0, 4)$ supersymmetry.

The gauge anomaly coefficient is given by (see appendix \ref{appendixanomalies}):
\be
 k_{SU(N_c)} = \half  2  N_f - N_c - N_c
  = N_f - 2N_c \ ,
\ee
which implies that we should take $N_f = 2N_c \equiv 2N$.
The anomaly coefficients for the flavor $SU(N_f)$ symmetry and $U(1)_B$ are
\be
 k_{SU(N_f)} = N \ , \qquad k_{U(1)_B} =  4 N^2 \ .
\ee
Also, the theory has non-vanishing 't Hooft anomalies involving $U(1)_v$:
\be
k_{U(1)_v} = 4 \ ,  \qquad k_{U(1)_R^+ \,\cdot\, U(1)_v} = -2 \ .
\ee

Similarly to the case with $SU(2)$ gauge group considered in the previous section, the theory has only Higgs branch and we expect $SU(2)_R^+$ to be the R-symmetry of the SCFT at the IR fixed point. By counting its anomaly coefficient in the UV theory we obtain
\be
 c_R = 6(N^2+1)\ , \qquad  c_L = 4(N^2+1) \ .
\ee
Again, $c_R/6$ agrees with the quaternionic dimension of the Higgs branch as expected.

\begin{figure}[ht]
\centering
\includegraphics[scale=1]{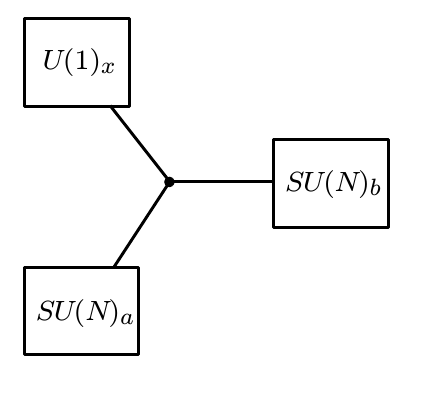}
\caption{The quiver diagram for theory $U_N^{(0, 4)}$ of free hypermultiplet in the bifundamental representation of $SU(N)_a \times SU(N)_b$ and barionic symmetry $U(1)_x$. }
\label{fig:UN}
\end{figure}

As in section \ref{section:SU2basic} we find that the index of the theory has a similar crossing-symmetry property. Consider a trinion $U_N^{(0, 4)}$ describing a hypermultiplet in the bifundamental representation of $SU(N) \times SU(N)$ (see Fig. \ref{fig:UN}). It also has a baryonic symmetry $U(1)$. The index is given by
\be
 \CI^{(0, 4)}_{U_N} (\ba, \bb, x; v; q) = \prod_{i, j=1}^N \frac{1}{\theta(v (x a_i b_j)^\pm)} \ ,
\ee
where $\ba, \bb, x$ denote fugacities for $SU(N)_a \times SU(N)_b \times U(1)_x$ respectively. Now, let us glue a pair of $U_N^{(0,4)}$ (by coupling them both to a $(0,4)$ $SU(N)$ vector multiplet) to form $SU(N)$ SQCD with $2N$ flavors. The index of the resulting theory reads
\be
 \CI^{(0, 4)}_{\schannel} (\ba, \bb, x, y) =\frac{1}{N!} \int_{\textrm{JK}} \left( \prod_{i=1}^{N-1} \frac{d \xi_i}{2\pi i \xi_i} \right) \CI^{(0, 4)}_{U_N} (\ba, \bxi, x) \CI^{(0, 4)}_{V, SU(N)}(\bxi) \CI^{(0, 4)}_{U_N} (\bxi^{-1}, \bb, y) \ ,
\ee
where we dropped $v, q$ dependence in the expression for brevity. The vector multiplet index is given by
\be
 \CI_{V, SU(N)}^{(0, 4)} (\bxi; v; q) = \theta\left(\frac{q}{v^2}\right) \prod_{i \neq j} \theta \left(\frac{q}{v^2}  \frac{\xi_i}{ \xi_j} \right) \theta \left(\frac{\xi_i }{ \xi_j}\right) \ .
\ee
Here we have used the flavor fugacities with $SU(N)_a \times SU(N)_b \times U(1)_x \times U(1)_y \subset SU(2N) \times U(1)$ manifest.

We find that the index is invariant under the exchange of $\mathbf{a} \leftrightarrow \mathbf{b}$ or equivalently $x \leftrightarrow y$:
\be
 \CI^{(0, 4)}_{\schannel} (\ba, \bb, x, y) = \CI^{(0, 4)}_{\schannel} (\bb, \ba, x, y) = \CI^{(0, 4)}_{\schannel} (\ba, \bb, y, x) \ .
\ee
\begin{figure}[ht]
\centering
\includegraphics[scale=1]{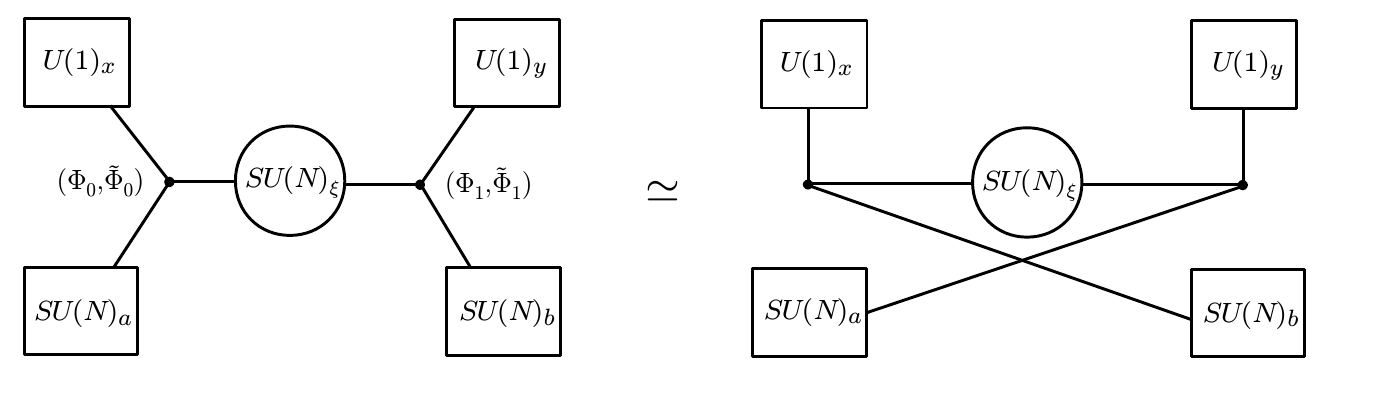}
\caption{The quiver on the left represents $(0,4)$ $SU(N)$ SQCD with $2N$ flavors as a gluing of two copies of $U_N^{(0,4)}$. The equivalence to the diagram on the right represents crossing-symmetry of the index.}
\label{fig:SUN-crossing}
\end{figure}
On the level of quiver diagrams this can be understood as a crossing symmetry between $s$-channel and $u$-channel (see Fig. \ref{fig:SUN-crossing}). This duality or crossing-symmetry implies that the spectrum of the operators in the CFT should obey such property. It is not automatic from the global symmetry of the theory.

The crossing-symmetry can be understood as a duality. Even though the matter content on both side of the dual theories are the same, the operator contents on one side are mapped to another operators on the other side. For example, we have gauge-invariant operators of the form as in the following table (here we decomposed $(\Phi, \tilde{\Phi})$ from (\ref{SUN-table}) into $(\Phi_{0, 1}, \tilde{\Phi}_{0, 1})$ of two copies of $U_N^{(0,4)}$ as shown in Fig. \ref{fig:SUN-crossing}):
\be
\begin{tabular}{c|cccc}
 operators & $U(1)_x$ & $U(1)_y$ & $SU(N)_A$ & $SU(N)_B$ \\
 \hline
$\e (\Phi_0)^k (\tilde{\Phi}_1)^{N-k}$ & $k$ & $-N+k$ & $\Lambda^k $ & $\Lambda^{k}$\\
 $\e (\tilde{\Phi}_0)^k (\Phi_1)^{N-k}$ & $-k$ & $N-k$ & $\Lambda^{N-k} $ & $\Lambda^{N-k}$ \\
 $ \Phi_0 \tilde{\Phi}_0$ & $0$ & $0$ & $N \otimes \bar{N}$ & $1$  \\
 $ \Phi_1 \tilde{\Phi}_1$ & $0$ & $0$ & $1$ & $N \otimes \bar{N}$ \\
 $ \Phi_0 \Phi_1$ & $1$ & $1$ & $N$ & $\bar{N}$  \\
  $ \tilde{\Phi}_0 \tilde{\Phi}_0$ & $-1$ & $-1$ & $\bar{N}$ & $N$
\end{tabular}
\ee
where $\Lambda^k$ is $k-$th antisymmetric representation and $\e$ is completely antisymmetric tensor to contract the gauge indices. The first two lines are baryonic operators where as the latter four are mesonic operators. Under the exchange of $U(1)_x$ and $U(1)_y$, the mesonic operators remain unchanged, but the baryonic operators are mapped via
\be \label{eq:SUNdualMap}
 (\Phi_0)^k (\tilde{\Phi}_1)^{N-k} \to (\Phi_1)^k (\tilde{\Phi}_0)^{N-k} \ , \quad \textrm{and} \qquad (\tilde{\Phi}_0)^k (\Phi_1)^{N-k} \to \tilde{\Phi}_1^k (\Phi_0)^{N-k} \ .
\ee

Let us now consider the $\CN=(4, 4)$ version of the theory. The matter contents are essentially the same except that we replaced $(0, 4)$ multiplets to $(4, 4)$ multiplets. We can write it more explicitly in terms of $\CN=(0, 2)$ superfields as in the following table:
\be
\begin{tabular}{c|ccc|c}
 & $SU(N_c)$ & $SU(N_f)$ & $U(1)_B$ & $U(1)_R^- \times U(1)_R^+ \times U(1)_I$ \\
 \hline
 $\Theta$ & adj & $1$ & $0$ & $(-1, 1, 0)$ \\
 $\S$ & adj & 1 & 0 & $(0, 1, 1)$ \\
 $\tilde\S$ & adj & 1 & 0 & $(0, 1, -1)$ \\
 $\Phi$ & $N_c$ & $N_f$ & $1$ & $(1, 0, 0)$ \\
 $\tilde{\Phi}$ & $\bar{N}_c$ & $\bar{N}_f$& $-1$ & $(1, 0, 0)$ \\
 $\G$ & $N_c$ & $N_f$ & $1$ & $(0, 0, 1)$ \\
 $\tilde{\G}$ & $\bar{N}_c$ & $\bar{N}_f$ & $-1$ & $(0, 0, 1)$
\end{tabular}
\ee
where $SU(2)_R^- \times SU(2)_R^+ \times SU(2)_I$ is $\mathcal{N}=(4,4)$ R-symmetry which an extra $SU(2)_I$ factor compared to the $\mathcal{N}=(0,4)$ case. As discussed in appendix \ref{appendix04}, this R-symmetry can be understood from the dimensional reduction of 6d $\CN=(1, 0)$ multiplets.
The theory have the following $J$-type superpotential and $E$-terms:
\be
 W = \tilde{\Phi}\Theta \Phi + \tilde{\Gamma} \tilde{\S} \Phi + \tilde{\Phi} \tilde{\S} \G \ ,
\ee
\be
 E_\Theta = [\S, \tilde{\S}] \ , \quad E_\G = \S \Phi \ ,  \quad E_{\tilde{\G}} = -\tilde{\Phi} \S \ .
\ee
The $\CN=(4, 4)$ gauge theory is expected to flow to two distinct CFTs on the Higgs branch and on the Coulomb branch \cite{Witten:1997yu, Aharony:1999dw}.

We can also compute the index for this theory. The index for the trinion theory $U_N^{(4, 4)}$ consists of the free bifundamental $(4, 4)$ hypermultiplets can be written as
\be
 \CI^{(4, 4)}_{U_N} (\ba, \bb, x; u, v; q) = \prod_{i, j=1}^N \frac{\theta(q^{1/2}u (x a_i b_j)^\pm)}{\theta(v (x a_i b_j)^\pm)} \ ,
\ee
where $u$ is the fugacity for the $U(1)_I \subset SU(2)_I$ symmetry. The vector multiplet index reads
\be
 \CI_{V, SU(N)}^{(4, 4)} (\bxi; u, v; q) = \left( \frac{\theta(q v^{-2})}{\theta(q^\half u^\pm v^{-1}) } \right)^{N-1} \prod_{i \neq j} \frac{\theta (q v^{-2} {\xi_i}/{ \xi_j} ) \theta ({\xi_i }/{ \xi_j})}{\theta (q^\half u^\pm v^{-1} {\xi_i}/{ \xi_j} )} \ .
\ee
Now we can write the index for the SQCD as
\be
 \CI^{(4, 4)}_{\schannel} (\ba, \bb, x, y) = \frac{1}{N!} \int_{\textrm{JK}} \left( \prod_{i=1}^{N-1} \frac{d \xi_i}{2\pi i \xi_i} \right) \CI^{(4, 4)}_{U_N} (\ba, \bxi, x) \CI^{(4, 4)}_{V, SU(N)}(\bxi) \CI^{(4, 4)}_{U_N} (\bxi^{-1}, \bb, y) \ ,
\ee
where we suppressed the dependence on $u, v$ and $q$. It also satisfies the crossing symmetry
\be
 \CI^{(4, 4)}_{\schannel} (\ba, \bb, x, y) = \CI^{(4, 4)}_{\schannel} (\bb, \ba, x, y) = \CI^{(4, 4)}_{\schannel} (\ba, \bb, y, x) \ ,
\ee
which implies constraints on the operator spectrum and IR duality as in the $\CN=(0, 4)$ case.

%%%%%%%%%%%%%%%%%%%%%%%%%%%%%%%%%%%%%%%%%%%%%%
\subsection{Dualities of quiver theories and $T_N^{(0, 4)}$ theory} \label{subsec:TN}
In this section, we discuss quiver gauge theories and dualities.

\subsubsection{Quiver gauge theories}

\paragraph{Linear quiver}
Let us consider linear quiver theories composed of connecting $m$ copies of $U_N$ blocks. This will yield $SU(N)^{m-1}$ gauge theory with bifundamentals in $SU(N)_i \times SU(N)_{i+1}$ where we identify $SU(N)_0$ and $SU(N)_{m}$ as the global symmetry groups, see Fig. \ref{fig:SUN-linear-quiver}.

\begin{figure}[ht]
\centering
\includegraphics[scale=1]{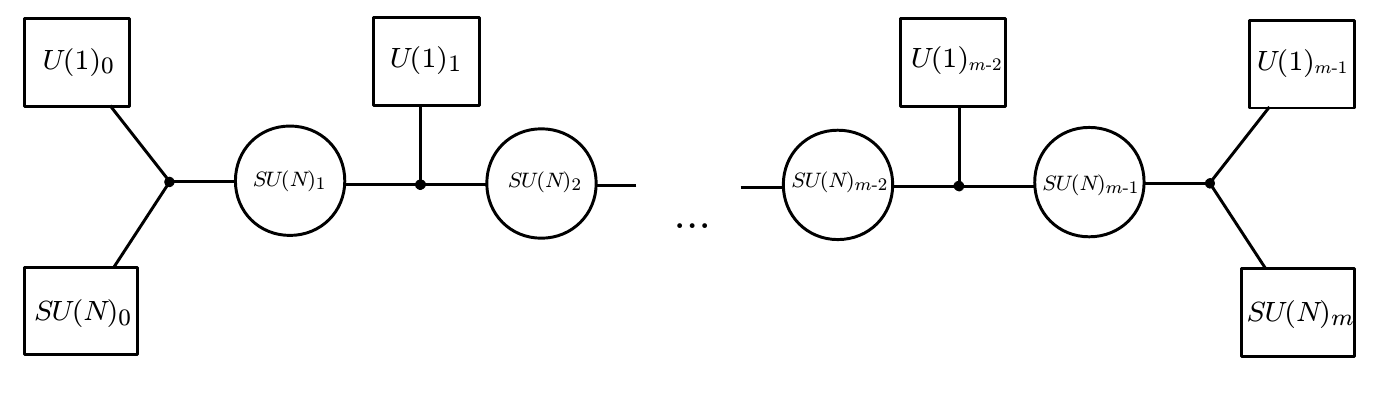}
\caption{A linear quiver realizing a theory with $SU(N)^{m-1}$ gauge group and $SU(N)^2\times U(1)^{m}$ flavor group.}
\label{fig:SUN-linear-quiver}
\end{figure}

The quiver gauge theory flows to CFT on the Higgs branch. The central charges can be computed easily to be
\be
 c_R = 6 \left(N^2 + m - 1 \right)  \ , \quad c_L = 4 (N^2 + m - 1) \ .
\ee
The (quaternionic) dimension of the Higgs branch is given by $c_R/6$.

As we have discussed in section \ref{subsec:SUN}, the index of the quiver theory also enjoys crossing-symmetry. It can be also applied to the linear quiver theory, which has the global symmetry $SU(N)_A \times SU(N)_B \times \left( \prod_{i=1}^m U(1)_i \right)$. The crossing-symmetry now extends to the permutation of all the $U(1)_i$ symmetries. Therefore we have a duality map analogous to \eqref{eq:SUNdualMap}, by applying the duality repeatedly. The single-trace gauge invariant operators contains the bayonic operators $\det \Phi_i$ and $\det \tilde{\Phi}_i$ with $i=0, \cdots, m$ and mesonic operators $\Phi_0 \tilde{\Phi}_0$ and $\Phi_m \tilde{\Phi}_m$. Under the permutation, $U(1)_i \leftrightarrow U(1)_j$, we exchange $\det \Phi_i \leftrightarrow \det \Phi_j$.

\paragraph{Circular quiver}

\begin{figure}[ht]
\centering
\includegraphics[scale=1]{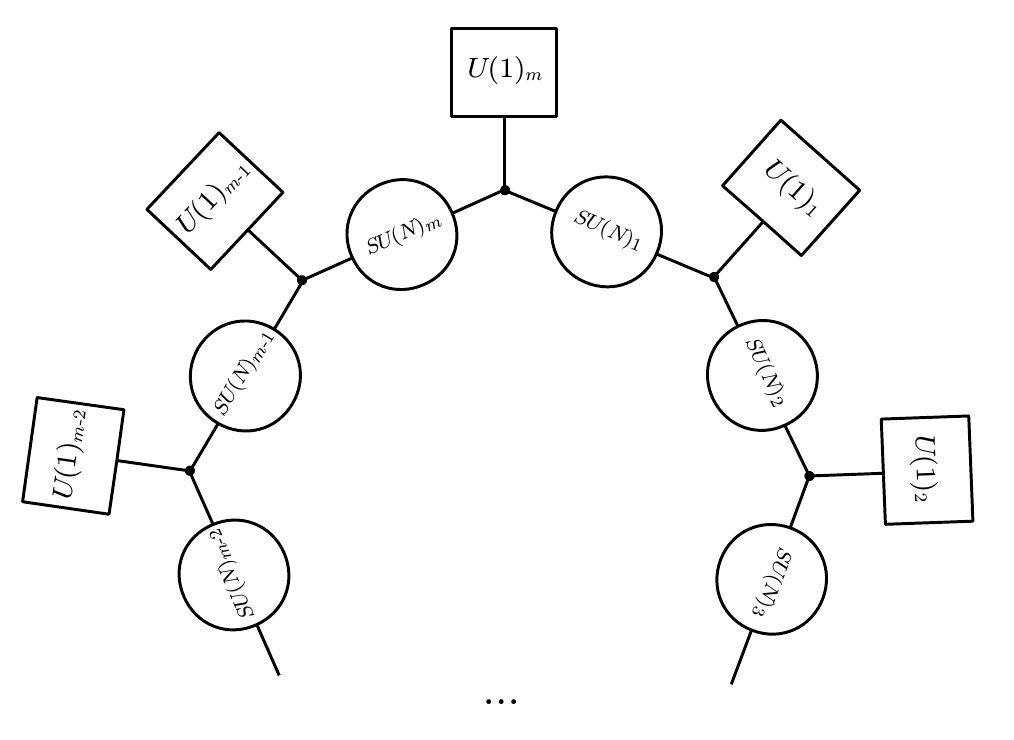}
\caption{A circular quiver realizing a theory with $SU(N)^{m}$ gauge group and $U(1)^{m}$ flavor group.}
\label{fig:SUN-circular-quiver}
\end{figure}

We can also consider a circular quiver theory by gauging the diagonal subgroup of $SU(N)_0 \times SU(N)_m$ of the linear quiver. As in the case of $SU(2)$ theories, we get a CFT on the Kibble branch with dimension $m+1$, see Fig. \ref{fig:SUN-linear-quiver}. The central charge of this theory is given by
\be
 c_R = 6(n_h - n_v + 1) = 6(m+1) \ , \qquad c_L = 4(m+1) + 2 \ .
\ee
Note that the central charges do not depend on the choice of the gauge group, even though the elliptic genus does depend on the gauge group.

\subsubsection{Analog of Argyres-Seiberg duality and $T^{(0, 4)}_3$ theory}
\label{section:T3}

Let us consider the $SU(3)$ case.
\begin{figure}[ht]
\centering
\includegraphics[scale=0.85]{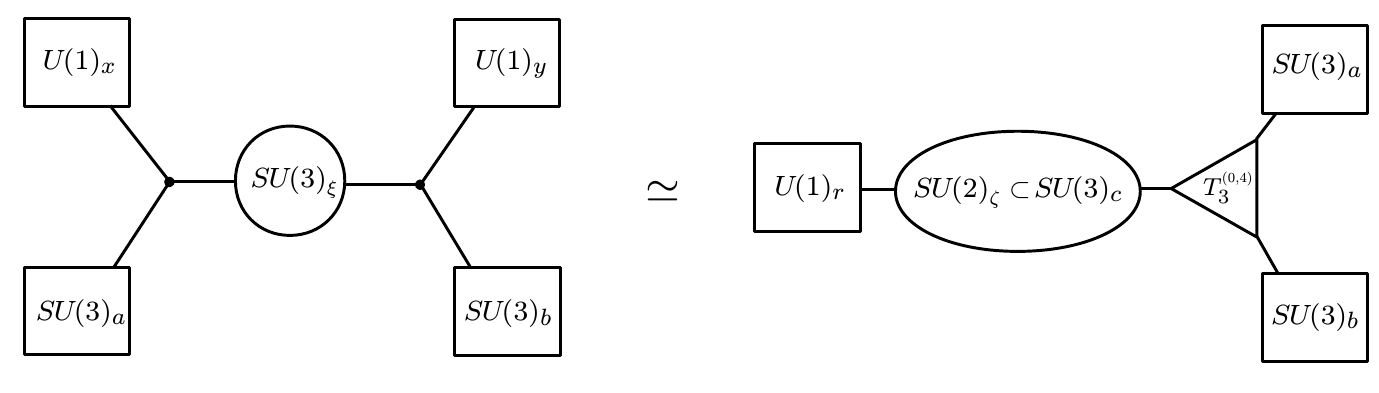}
\caption{Two-dimensional $\mathcal{N}=(0,4)$ analog of Argyres-Seiberg duality. The subscripts of flavor and gauge groups denote corresponding fugacities in the index.}
\label{fig:T3-duality}
\end{figure}
Similarly to the $\CN=2$ 4d case \cite{Argyres:2007cn} we conjecture that $SU(3)$ gauge theory with $6$ flavors is dual to the theory constructed from $T_3^{(0,4)}$, two hypermultiplets and $(0,4)$ $SU(2)$ vector multiplet gauging the diagonal of $SU(2)\subset SU(3)$ subgroup of the flavor symmetry $T_3^{(0,4)}$ and $SU(2)$ flavor symmetry acting on two hypermultiplets (see Fig. \ref{fig:T3-duality}). On the level of indices the duality reads

\begin{equation}
 I^{(0,4)}_\schannel (\ba,\bb;x,y)=\frac{1}{2}\int\limits_\text{JK}\frac{d\zeta}{2\pi i\zeta}\,
 \frac{I^{(0,4)}_{V,SU(2)}(\zeta)}{\theta(vs^{\pm 1}\zeta^{\pm 1})}\,
 I_{T_3}^{(0,4)}(\ba,\bb,\bc)\,,
 \label{T3-duality-index}
\end{equation}
\[
(c_1,c_2,c_3)\equiv(r\zeta,r/\zeta,1/r^2),\qquad x\equiv s^{1/3}/r,\qquad y\equiv s^{-1/3}/r
\]
Assuming that as in $SU(2)$ case $T^{(0,4)}_3$ describes a certain Higgs branch CFT its central charges can be easily determined from the relation depicted in Fig. \ref{fig:T3-duality}:
\begin{equation}
 c_R=6\cdot 11\,\qquad c_L=4\cdot 11,
\end{equation}
where 11 is the quaternionic dimension of the Higgs branch.

Similarly to $\mathcal{N}=2$ 4d case \cite{Gadde:2010te} one can go further and solve the integral equation (\ref{T3-duality-index}) for $I^{(0,4)}_{T_3}$. To do so let us use expression (\ref{index-V2}) for $I^{(0,4)}_{V,SU(2)}(\zeta)$ and apply the inversion formula (\ref{inv-formula}):
\begin{equation}
 I_{T_3}^{(0,4)}(\ba,\bb,\bc)=\frac{(q;q)^2}{2\,\theta(v^2\zeta^{\pm 2})}
 \int\limits_\text{JK}\frac{d s}{2\pi i\,s}\,
  \frac{\theta(s^{\pm 2})\theta(v^{-2})}{\theta(v^{-1}s^{\pm 1}\zeta^{\pm 1})}\,
 I^{(0,4)}_\schannel (\ba,\bb;x,y)\,.
 \label{T3-index-formula}
\end{equation}
Since at each step the one can calculate contour integrals explicitly by residues, this provides us with explicit (although quite long) expression for the index of $T_3^{(0,4)}$ theory. The result is symmetric under permutation of $SU(3)$ fugacities $\ba,\bb,\bc$ which is a non-trivial check supporting the conjecture about the existence of such theory $T_3^{(0,4)}$ and the fact that its flavor symmetry is enhanced to $E_6\supset SU(3)^3$. The expansion of the index w.r.t $q$ and $v$ in terms of characters of $E_6$ representations reads:
\begin{equation}
\begin{split}
 I_{T_3}^{(0,4)}=&\left(\mathbf{1}+\mathbf{78}\,v^2+\mathbf{2430}\,v^4+\ldots\right)\\
 &+\left(\mathbf{(1+78)}+(\mathbf{1} + 2 \cdot\mathbf{78 + 2430 + 2925})v^2+\ldots\right)\,q
 +\ldots
 \label{I04E6series}
\end{split}
\end{equation}
Let us note that $q^0$ order coincides with the Hilbert series of the Higgs branch moduli space, conjectured to be the same as the moduli space of one $E_6$ instanton \cite{Benvenuti:2010pq,Keller:2011ek,Keller:2012da}. The leading terms also agree with the $S^2 \times T^2$ partition function computed in \cite{Gadde:2015xta}.

The $T_3^{(0, 4)}$ is a 2d version of the celebrated $E_6$ SCFT of Minahan-Nemeschansky \cite{Minahan:1996fg}. One important difference here is that our theory does not have any Coulomb branch. We can also come up with a ``Lagrangian" for the ``non-Lagrangian" $E_6$ SCFT as done in \cite{Gadde:2015xta}. The $\CN=(0,2)$ field content can be straightforwardly read off the integral representation of the index of $T_3^{(0,4)}$. Namely, (\ref{T3-index-formula}) represents combining the theory associated to the quiver in the left part of Fig. \ref{fig:T3-duality} together with two chiral multiplets in representations
\begin{equation}
 \mathbf{(2,2)_{-1}\oplus (1,3)_{2}}
\end{equation}
of $SU(2)_s\times SU(2)_\zeta\times U(1)_v$, two Fermi multiplets in
\begin{equation}
  \mathbf{(1,1)_{-2}\oplus (1,1)_{2}}\,,
\end{equation}
and then gauging $SU(2)_s$ with $\mathcal{N}=(0,2)$ Vector multiplet. The choice of superpotential should be consistent with global symmetry charges appearing in the index. The result is in agreement with twisted compactification of $\CN=1$ 4d theory proposed in \cite{Gadde:2015xta} on $S^2$.

As we have discussed in section \ref{subsec:SU2TQFT}, crossing-symmetry implies the TQFT structure of the elliptic genus. But unlike the case of $SU(2)$ theories, we have two distinct type of punctures: $SU(3)$ (maximal) puncture and $U(1)$ (minimal) puncture. We have already shown in section \ref{subsec:SUN} that the index remains unchanged upon exchanging two $U(1)$ punctures or two $SU(N)$ punctures in the second frame of figure \ref{fig:T3-duality}. With the expansion \ref{I04E6series} we can further show that crossing-symmetry exists in the theory with four maximal punctures up to certain order of $q$ and $v$. Therefore the TQFT structure holds for the $SU(3)$ theories as well.

\subsubsection{$T^{(0, 4)}_N$ theory and duality}

So far we have discussed 2d $\CN=(0, 4)$ gauge theories without referring to its higher-dimensional origin. Let us point out that theories we studied so far can be realized from M5-branes on a product Riemann surfaces.
Consider 4d $\CN=2$ class $\CS$ theory of type $A_{N-1}$ with the UV curve given by $\CC$ with genus $g$ and $n$ punctures.
Now, let us compactify this 4d theory on $\mathbb{CP}^1$ with a partial topological twist. Since we have two independent R-symmetries $SU(2)_R \times U(1)_r$, we have to choose one. Twisting with respect to $SU(2)_R$ and $U(1)_r$ gives us $\CN=(2, 2)$ or $\CN=(0, 4)$ supersymmetry in 2d respectively. We are interested in the $\CN=(0, 4)$ twisting. In this case, for each free vector multiplets in 4d, we get one $(0, 4)$ vector, and for each free hypermultiplets in 4d, we get one $\CN=(0, 4)$ hypermultiplet. See appendix \ref{sec:twist} for the detail.

Upon taking small volume limit of $\mathbb{CP}^1$, we also take the 4d gauge coupling to be small to get a 2d gauge theory, since $1/g_{2d}^2 = \textrm{vol}(\mathbb{CP}^1)/g_{4d}^2$. There can be also S-dual descriptions for the 4d theory, which we also dimensionally reduce to another 2d gauge theory. Note that for this case, we need to take the dual gauge couplings to zero while shrinking the volume of the sphere. In principle, dimensional reduction of these two different limits do not necessarily give the same CFT in 2d. When taking the 2d limit, we have to decouple 4d building blocks in a different way for each S-dual frames. From there we are turning on gauge couplings to RG flow to 2d CFT, which we call as $T_{\frak{su}(N)}[\mathbb{CP}^1 \times \CC_{g, n}]$. Nevertheless, we find evidences that different 2d `gauge theories' (which can also involve `non-Lagrangian' $T_N^{(0, 4)}$ block) obtained from dual descriptions flow to the same 2d $\CN=(0, 4)$ SCFT.\footnote{See discussions on 3d to 2d \cite{Aganagic:2001uw} and 4d to 3d reduction \cite{Aharony:2013dha}.} Note that since the gauge couplings undergo RG flows, the dependence on the complex structure of $\CC_{g, n}$ disappears in the IR. Crossing-symmetry (or TQFT structure) of elliptic-genus is a check of this conjecture.

As a corollary, the effective number of vector and hypermultiplets remain the same in the 2d $\CN=(0, 4)$ theory as the 4d $\CN=2$ theory. Given this assumption, we can compute the central charges of the 2d theory $T_{\frak{su}(N)}[\mathbb{CP}^1 \times \CC_{g, n}]$. The number of effective vector and hypermultiplets can be decomposed in terms of a contribution from the background Riemann surface, and local contributions from the punctures \cite{Chacaltana:2012zy}. For the $SU(N)$ theory, we get
\be
 n_h (\CC_g) = \frac{4}{3} (g-1)  N (N^2 - 1) \ , \quad
 n_v (\CC_g) = \frac{1}{3} (g-1) (N-1)(4N^2 + 4N + 3) \ ,
\ee
for a genus $g$ curve, and
\be
 n_h (Y_{\textrm{max}}) = \frac{2}{3} N(N^2-1) \ , \qquad
 n_v (Y_{\textrm{max}}) = \frac{1}{6} N(N-1)(4N+1)\  ,
\ee
for the maximal puncture and
\be
 n_h (Y_{\textrm{min}}) = N^2 \ , \qquad
 n_v (Y_{\textrm{min}}) = (N+1)(N-1) \  ,
\ee
for the minimal puncture. We define $n^{(g, n)}_h = n_h(\CC_g) + \sum_{i=1}^n n_h (Y_i)$ and $n^{(g, n)}_v = n_v(\CC_g) + \sum_{i=1}^n n_v (Y_i)$.

As we have discussed, for $g=0$, we have the Higgs branch, and for $g \ge 1$, we have the Kibble branch. We get
\be
 c_R = 6(n_h^{(g=0, n)}-n_v^{(g=0, n)}) \ , \quad
 c_L = 4(n_h^{(g=0, n)} - n_v^{(g=0, n)})\ ,
\ee
for $g=0$ and
\be
 c_R = 6(n_h^{(g, n)}-n_v^{(g, n)}+g) \ , \quad
 c_L = 4(n_h^{(g, n)} - n_v^{(g, n)}+g) + 2g \ ,
\ee
for $g\ge 1$.  One can check that this result indeed agrees with central charge expressions we computed in previous sections from the 2d gauge theory description for the case with $g=0$ with 2 maximal and $n-2$ minimal punctures and $g=1$ with $n$ minimal punctures.

The $T_N^{(0, 4)}$ theory corresponds to a sphere with 3 maximal punctures with $SU(N)_a \times SU(N)_b \times SU(N)_c$ global (non-R) symmetry. We get the central charges to be
\be
 c_R = 3(N-1)(3N+2) \ , \qquad c_L = 2(N-1)(3N+2) \ ,
\ee
agrees with $N=2, 3$ results in section \ref{section:SU2basic} and \ref{section:T3}.

\begin{figure}[ht]
\centering
\includegraphics[scale=0.85]{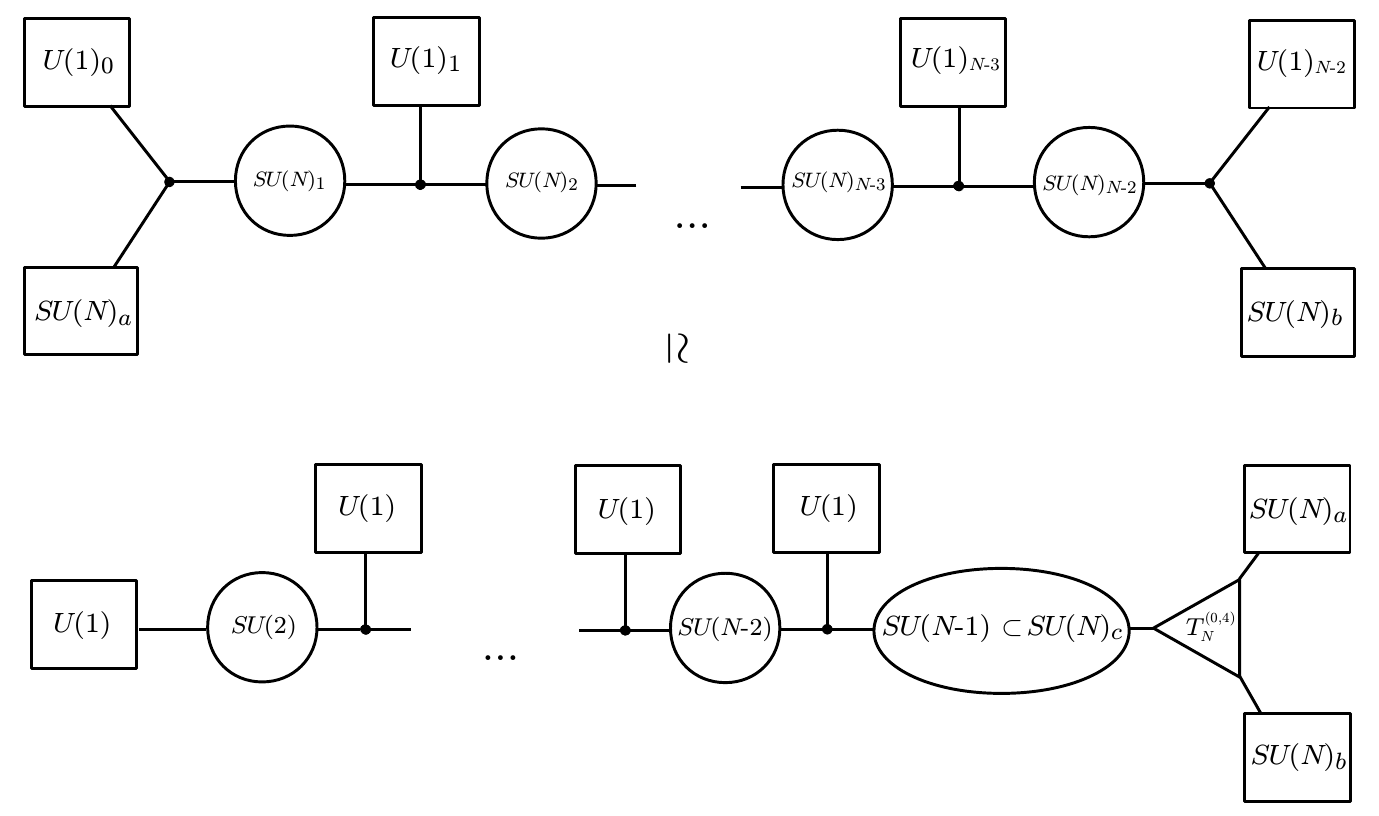}
\caption{The duality between $T_N^{(0,4)}$ coupled to a quiver tail (bottom) and a linear quiver with $SU(N)^{N-2}$ gauge group (top).}
\label{fig:TN-duality}
\end{figure}

We can also compute the central charges from the dual Lagrangian description. When $T_N$ theory is coupled to a quiver tail, of the form $SU(N)_c \supset SU(N-1) \times SU(N-2) \times \cdots \times SU(2)$ with bifundamentals and fundamentals attached as in the quiver diagram in the bottom of Fig. \ref{fig:TN-duality}. This theory is dual to a linear quiver with gauge group $SU(N)^{N-2}$, and fundamental attached to the end as in the top of Fig. \ref{fig:TN-duality}. The $SU(N)$ flavor symmetry anomaly coefficient can be computed in the dual frame:
\be
 k_{SU(N)_x} = \tr \g^3 SU(N)_x^2 = N \qquad (\textrm{where}~x=a, b, c) \ .
\ee

%%%%%%%%%%%%%%%%%%%%%%%%%%%%%%%%%%%%%%%%%%%%%%%%
\section{Other dualities}

\subsection{$\CN=(0,2)$ and $\CN=(2,2)$ analog of the crossing symmetry}

In this section we will show that there are $\CN=(0,2)$ and $\CN=(2,2)$ analogies of the crossing symmetry property of the spectrum considered in the previous section. In what follows we will study the cases  $\CN=(0,2)$ and $\CN=(2,2)$ in parallel. Let us define $U_N^\CN$ as $N^2$ chiral multiplets in $\mathbf{(N,N)_{+1}}$ representation of $SU(N)_a\times SU(N)_b\times U(1)_x$ flavor symmetry. The corresponding index contribution reads
\begin{equation}
 \CI_{U_N}^{(0,2)}(\ba,\bb,x;q)=\prod_{i,j=1}^N\frac{1}{\theta(xa_ib_j)}
\end{equation}
or
\begin{equation}
 \CI_{U_N}^{(2,2)}(\ba,\bb,x;q)=\prod_{i,j=1}^N\frac{\theta(txa_ib_j)}{\theta(xa_ib_j)}
\end{equation}
where $\ba=\{a_i\}_{i=1}^N$, $\bb=\{b_i\}_{i=1}^N$ are $SU(N)_{a,b}$ fugacities satisfying
\begin{equation}
 \prod_{i}a_i=1,\qquad \prod_{i}b_i=1,
\end{equation}
and $x$ is $U(1)_x$ fugacity. In the $\CN=(2,2)$ case we have an extra left-moving $U(1)$ R-symmetry fugacity $t$. Now let us consider $\CN=(0,2)$ or $\CN=(2,2)$ $SU(N)$ SQCD with $N$ fundamental and $N$ anti-fundamental flavors, which can be obtained by coupling two copies of $U_N^\CN$ to $SU(N)$ vector multiplet. In the $\CN=(0,2)$ case, similarly to the $(0,4)$ case, gauge anomaly contributions from chiral and vector multiplets cancel each other. The theory has the following index:
\begin{equation}
  \CI^\CN_{\schannel} (\ba, \bb, x, y) = \frac{1}{N!}\int\limits_{\textrm{JK}}\prod_{i=1}^{N-1} \frac{d\xi_i}{2\pi i\,\xi_i}\,
  \CI^\CN_{U_N} (\ba, \bxi, x) \,
  \CI^\CN_{V, SU(N)}(\bxi)\,
  \CI^\CN_{U_N} (\bxi^{-1}, \bb, y) \ ,
  \label{index-N2-int}
\end{equation}
where
\begin{equation}
 \CI^{(0,2)}_{V,SU(N)}(\bxi)= (q;q)^{N-1} \prod_{i\neq j}\theta(\xi_i/\xi_j)\,,
 \label{index-02-vector}
\end{equation}
\begin{equation}
 \CI^{(2,2)}_{V,SU(N)}(\bxi)= (q;q)^{N-1}\,\frac{\prod_{i\neq j}\theta(\xi_i/\xi_j)}{\prod_{i,j}\theta(t\,\xi_i/\xi_j)}\,.
 \label{index-22-vector}
\end{equation}
One can show that the index (\ref{index-N2-int}) is invariant under the exchange of fugacities $\ba\leftrightarrow \bb$ or, equivalently, $x\leftrightarrow y$. Therefore we would like to conjecture that, as in the $(0,4)$ and $(4,4)$ cases, the spectrum of the SCFT at the IR fixed point is invariant under the exchange of flavor symmetries $U(1)_x\leftrightarrow U(1)_y$

\subsection{Duality to a $\CN=(0,2)$ Landau-Ginzburg theory}

In the case of $\CN=(0,2)$ one can check that the index (\ref{index-N2-int}) satisfies the following identity:
\begin{equation}
 \CI^{(0,2)}_{\schannel} (\ba, \bb, x, y)=\frac{\theta(x^Ny^N)}{\theta(x^N)\theta(y^N)\prod_{i,j}\theta(xya_ib_j)}
 \label{I02-LG}
\end{equation}
from which the symmetry under the exchange $x\leftrightarrow y$ becomes obvious. This result can be reformulated in the following way. Let us define
\begin{equation}
 \CI^{(0,2)}_{K_N}(\ba,\bb^{-1},x)\equiv\frac{\theta(q/x^N)}{\prod_{i,j}\theta(xa_i/b_j)}\,.
 \label{index-KN}
\end{equation}
which can be understood as the index of the $(0,2)$ Landau-Ginzburg model $K_N^{(0,2)}$ with $N^2$ chiral multiplets $\{\Phi_{i}^j\}_{i,j=1}^N$ with R-charge $0$, Fermi multiplet $\Gamma$ with $R$-charge $1$ and superpotential
\begin{equation}
 W=\Gamma\,\det \Phi.
\end{equation}
The superpotential imposes the condition
\begin{equation}
 \det \Phi=0
 \label{conifold-eq}
\end{equation}
and breaks $U(N^2)$ flavor symmetry of $N^2$ free chirals to $SU(N)_a\times SU(N)_b\times U(1)_x$. The equation (\ref{conifold-eq}) describes a $(N^2-1)$-dimensional conifold $\CC_N$ embedded in $\mathbb{C}^{N^2}$. In particular
\begin{equation}
 \CC_2=\{\Phi_1^1\Phi_2^2-\Phi_1^2\Phi_2^1=0\}
\end{equation}
is the Calabi-Yau threefold usually referenced to as just ``conifold'' in the literature. Then the equation (\ref{I02-LG}) can be written as
\begin{equation}
   \frac{1}{N!}\int\limits_{\textrm{JK}} \frac{d\bxi}{2\pi i\bxi}\,
  \CI^{(0,2)}_{K_N} (\ba, \bxi^{-1}, x) \,
  \CI^{(0,2)}_{V, SU(N)}(\bxi)\,
 \CI^{(0,2)}_{K_N} (\bxi,\bb^{-1}, 1/y) \,
 = \CI^{(0,2)}_{K_N} (\ba, \bb^{-1}, x/y)
   \label{index-KN-int}
\end{equation}
Physically (\ref{index-KN-int}) means that gauging a diagonal subgroup of $SU(N)\times SU(N)$ flavor symmetry from two copies of $K^{(0,2)}_N$ is dual to just one copy of $K^{(0,2)}_N$. Let $(\Phi^{(1)})_i^\alpha$, $(\Phi^{(2)})_\beta^j$ be chiral fields from two copies of $K^{(0,2)}_N$ in the l.h.s. of duality. The conditions $\det \Phi^{(1,2)}=0$ kill baryons of the theory in the chiral ring. This means that we are only left with mesons $\Phi^i_j\equiv(\Phi^{(1)})^\alpha_j (\Phi^{(2)})_\alpha^i$ which play the roles of chiral fields of the dual Landau-Ginzburg model. The condition $\det\Phi =0$ is obviously satisfied and one can also show there are no additional conditions on $\Phi$. Geometrically the statement can be understood as the following relation:
\begin{equation}
 (\CC_N\times \CC_N )\kquot SU(N) \cong \CC_N.
\end{equation}
Also, this duality is similar to a $\mathcal{N}=(0,2)$ Seiberg-like duality found in \cite{Gadde:2013lxa} in the case when there are no Fermi multiplets in fundamental representation of the gauge group. There is an important difference however, theories considered in the aforementioned paper had $U(N)$ gauge symmetry, not $SU(N)$.

As we show in appendix \ref{appendixinversion}, the identity (\ref{index-KN-int}) can be used to derive an iversion formula for a certain integral operator with kernel constructed from theta-functions. It is analogous to the inversion formula  in \cite{gamma-inversion-math} for an operator with kernel constructed in a similar way from elliptic Gamma functions and allows us to find an explicit expression for the index of $T_3^{(0,4)}$ theory in section \ref{section:T3}.

%%%%%%%%%%%%%%%%%%%%%%%%%%%%%%%%%%%%%%%%%%%%%%%%

\acknowledgments{We would like to thank J.~Andersen, I.~Bah, A.~Gadde, S.~Gukov, K. Intriligator, S.~Kachru, Sj.~Lee, D.~Nemeschansky, D.~Pei, V.~Stylianou, D.~Xie for useful discussions.
The work of P.P. is supported in part by the Sherman Fairchild scholarship and by NSF Grant PHY-1050729.
The work of J.S. is supported by the US Department of Energy under UCSD's contract de-sc0009919.
The work of W.Y. is supported in part by the Sherman Fairchild scholarship and by DOE grant DE-FG02-92-ER40701.
P.P. would like to thank Centre for Quantum Geometry of Moduli Spaces in Aarhus University for hospitality during the final stage of the work.
J.S. would like to thank Enrico Fermi Institute at the University of Chicago for hospitality during the final stage of the work.
Opinions and conclusions expressed here are those of the authors and do not necessarily reflect the views of funding agencies.}

\appendix
\section{Review on $\CN=(0, 2)$ and $\CN=(0, 4)$ theory}
\label{appendix04}

Let us summarize some basic facts about $\CN=(0, 2)$ and $\CN=(0, 4)$ gauge theories \cite{Witten:1993yc}. See also \cite{Edalati:2007vk,Tong:2014yna}.

\paragraph{$\CN=(0, 2)$ multiplets}
A general $\CN=(0, 2)$ gauge theory can have the following supersymmetry multiplets:
\be
\centering
\begin{tabular}{|c|c|c|}
\hline
 Multiplets & Superfield & Components (on-shell) \\
 \hline
 Vector &$U$  & $(A_\mu, \lambda_-)$ \\
 Chiral & $\Phi$ & $(\psi_+, \phi)$ \\
 Fermi & $\Psi$ & $(\psi_-)$ \\
 \hline
\end{tabular}
\ee
Here, the subscript $\pm$ stands for right/left-moving complex Weyl spinors respectively. An  $\mathcal{N}=(0,2)$ theory allows formulation in $(x^\pm,\theta^+,\bar\theta^+)$ superspace. A chiral superfield satisfies
\be
 \bar{\CD}_+ \Phi = 0 \ ,
\ee
and has the following expansion:
\begin{equation}
 \Phi=\phi+\sqrt{2}\theta^{+}\psi_{+}-i\theta^{+}\bar{\theta}^{+}\partial_{+}\phi.
\end{equation}

A Fermi superfield satisfies
\be \label{eq:Fermi}
 \bar{\CD}_+ \Psi = E (\Phi_i) \ ,
\ee
where $E(\Phi_i)$ is a holomorphic function of the chiral superfields $\Phi_i$ which transforms in the same way as $\Psi$. This condition leads to the following expansion:
\begin{equation}
 \Psi=\psi_{-}-\sqrt{2}\theta^{+}G-i\theta^{+}\bar{\theta}^{+}\partial_{+}\psi_{-}-\sqrt{2}\bar{\theta}^{+}E.
\end{equation}
where $G$ is an auxillary superfield. Finally, the vector superfield has the following form:
\begin{equation}
 U=A_--2i\theta^{+}\lambda_{-}-2i\bar{\theta}^{+}\bar{\lambda}_{-}+2\theta^{+}\bar{\theta}^{+}D.
\end{equation}
The corresponding field strength forms a Fermi superfield $\Lambda$, which is consistent with the fact that (bosonic) vector field in 2d is non-dynamical.

There are two different types of `superpotential' in $\CN=(0, 2)$ theory. To each Fermi multiplets $\Psi_a$, introduce a holomorphic function $J^a
(\Phi_i)$. Then we write the SUSY action
\be
 S_J = \int d^2 x d\theta^+  \Psi_a J^a (\Phi_i) + \textrm{h.c} \ .
\ee
We can write `superpotential' as $W = \Psi_a J^a(\Phi)$, and integrate over the half-superspace.

There is also $E$-type superpotential, which appears in the right-hand side of the \eqref{eq:Fermi}. There is one condition we need to impose to ensure
supersymmetry:
\be
 E \cdot J \equiv \sum_a E_a J^a = 0 \ .
\ee

\paragraph{$\CN=(0, 4)$ multiplets}
There is no simple superspace formalism in the case of $\CN=(0, 4)$ supersymmetry. An $\CN=(0,4)$ gauge theory is usually formulated in terms of combinations of $\CN=(0,2)$ which combine into the following $\CN=(0,4)$ multiplets:
\be
\centering
\begin{tabular}{|c|c|c|c|}
	\hline
	Multiplets & $\CN=(0, 2)$ superfields & Components & $SU(2)_R^- \times SU(2)_R^+$ \\
	\hline
	Vector & vector + Fermi $(U, \Theta)$ & $(A_\mu, \lambda_-^{a})$ & $(1, 1), (2, 2)$ \\
	Hypermultiplet & chiral + chiral $(\Phi, \tilde{\Phi})$ & $(\phi^a , \psi_{+, b})$ & $(2, 1), (1, 2)$ \\
	Twisted hyper & chiral + chiral $(\Phi', \tilde{\Phi}')$ & $(\phi'_a, \psi'^{b}_+ )$ & $(1, 2), (2, 1)$ \\
	Fermi & Fermi + Fermi $(\Gamma, \tilde{\Gamma})$ & $(\psi_-^a)$ & $(1, 1)$\\
	\hline
\end{tabular}
\ee
Here $a, b=1, 2$. We remark that $\CN=(0, 4)$ supersymmetry in principle does not require $\CN=(0, 4)$ Fermi multiplets to have two copies of $\CN=(0, 2)$ Fermi multiplets (see e.g. \cite{Witten:1994tz}). In our case, as in \cite{Tong:2014yna},  we define a (0,4) Fermi multiplet as a pair of Fermi multiplets in the conjugate representations.

When a hypermultiplet couples to a vector multiplet, we have a superpotential coupling between the hypermultiplet and Fermi multiplet $\Theta$ in the vector given as
\be
 J^\Theta = \Phi \tilde{\Phi} \ , \qquad  \quad W = \tilde{\Phi}\Theta \Phi \ .
\ee
This is analogous to the superpotential coupling in 4d $\CN=2$ theory between chiral adjoint in a vector multiplet and a hypermultiplet.

For a twisted hypermultiplet, the coupling is done through the $E$-term, instead of the superpotential (or $J$-term). It is given by
\be
E_\Theta = \Phi' \tilde{\Phi}' \ ,
\ee
where the right-hand side of the equation transform as the adjoint of the gauge group.

For the case of Fermi multiplet, there is no coupling between $\Theta$ and $\Gamma, \tilde{\Gamma}$. But, it is possible to include a quadratic $E$ or $J$ term while
preserving the $SO(4)_R$ symmetry.

\paragraph{$\CN=(4, 4)$ multiplets}
can be understood as pairs of $\CN=(0,4)$ multiplets:
\be
\centering
\begin{tabular}{|c|c|c|}
	\hline
	Multiplets &$\CN=(0, 4)$ multiplets & $\CN=(0, 2)$ superfields \\
	\hline
	Vector & vector + twisted hyper & $(U, \Theta), (\S, \tilde{\S})$  \\
	Hypermultiplet & hyper + Fermi & $(\Phi, \tilde{\Phi}), (\Gamma, \tilde{\Gamma})$\\
	\hline
\end{tabular}
\ee
An $\CN=(4, 4)$ vector multiplet contains adjoint valued twisted hypermultiplet. The $\CN=(0, 2)$ chiral multiplets in the twisted hypermultiplet couple
with the $\CN=(0, 4)$ vector multiplet via
\be
 E_\Theta = [\S, \tilde{\S}] \ .
\ee
And a hypermultiplet couples with vector multiplet with
\be
 W = \tilde{\Phi} \Theta \Phi \ .
\ee
There is also a coupling between $\CN=(0, 4)$ Fermi, hyper and a twisted hypermultiplet. It involves $J$-term given as
\be
 W = \tilde{\Gamma} \tilde{\S} \Phi + \tilde{\Phi} \tilde{\S} \G \ ,
\ee
and also the $E$-term
\be
 E_\G = \S \Phi \ ,  \quad E_{\tilde{\G}} = -\tilde{\Phi} \S \ .
\ee
These terms satisfy the constraint $E \cdot J = 0$.

One can obtain $\CN=(4, 4)$ multiplets starting from 6d $\CN=(1, 0)$ gauge theories and then dimensionally reducing to 2d. In 6d, we have $SU(2)_R$
symmetry. The vector inside a vector multiplet is a singlet under the $SU(2)_R$. A hypermultiplet contains complex scalars in the doublet of $SU(2)_R$.
Upon dimensional reduction, we get R-symmetry $SO(4)_R = SU(2)_l \times SU(2)_r$. The left/right-moving supercharges are in $(2, 1, 2)/(1, 2, 2)$
representations of $SU(2)_l \times SU(2)_r \times SU(2)_R$. The charges of the component fields are as follows:
\be
\centering
\begin{tabular}{|c|c|c|}
	\hline
	Multiplets & components & $SU(2)_l \times SU(2)_r \times SU(2)_R$ \\
	\hline
	Vector & $A_\mu$ & $ (1, 1, 1)$ \\
		& $\phi$ & $(2, 2, 1)$ \\
		& $\lambda_-$ & $(1, 2, 2)$ \\
		& $\lambda_+$ & $(2, 1, 2)$ \\
	\hline
	Hypermultiplet & $q$ & $(1, 1, 2)$ \\
		& $\psi_-$ & $(2, 1, 1)$ \\
		& $\psi_+$ & $(1, 2, 1)$ \\
	\hline
\end{tabular}
\ee
Here $SU(2)_R = SU(2)_R^-$, $SU(2)_r = SU(2)_R^+$ and $SU(2)_l = SU(2)_I$. The other R-symmetry $SU(2)_l$ becomes the global symmetry for $(0, 4)$ theories.

Note that the scalar in the hypermultiplet is uncharged under $SU(2)_l \times SU(2)_r$ but charged under $SU(2)_R$, whereas the scalar in the vector
multiplet is charged under the $SU(2)_R$ but uncharged under $SU(2)_l \times SU(2)_r$. It has been argued that $\CN=(4, 4)$ gauge theory flows to two
decoupled SCFTs on the Higgs branch and the Coulomb branch \cite{Witten:1997yu, Aharony:1999dw}. For a large value of these scalar fields, we can trust
the semi-classical description, which is given by the Higgs/Coulomb branch. For the Higgs branch theories, the R-symmetry should be given by $SU(2)_l
\times SU(2)_r$ since the scalars are charged under $SU(2)_R$. It is the other way around for the Coulomb branch theories. (Here the extra $SU(2)$ R-symmetry is not visible in the UV.) Since R-symmetries on the Coulomb branch and Higgs branch are distinct, they cannot be the same SCFT.

%%%%%%%%%%%%%%%%%%%%%%%%%%%%%%%%%%
\section{Review on elliptic genus}
\label{appendixindex}

\subsubsection*{Elliptic genus for $(0,2)$ gauge theories}

The elliptic genus of $\CN=(0,2)$ supersymmetric theories was discussed in \cite{Gadde:2013wq,Benini:2013nda,Benini:2013xpa}. We will summarize the prescription for computing the elliptic genus of $\CN=(0,2)$ theories in this section.

Consider a two-dimensional theory with $\CN=(0,2)$ supersymmetry and a flavor symmetry group $\CF$. The elliptic genus on Ramond (R) sector is defined as
\begin{equation}
  \CI^{(0,2),R}(\ba;q)=\Tr_{R}(-1)^Fq^{H_L}\bar{q}^{H_R}\prod_{i}a_i^{f_i},
\end{equation}
while the elliptic genus on Neveu-Schwarz (NS) sector is defined as
\begin{equation}
  \CI^{(0,2),NS}(\ba;q)=\Tr_{NS}(-1)^Fq^{H_L}\bar{q}^{H_R-\half J_R}\prod_ia_i^{f_i},
\end{equation}
where $\Tr_{R}$ or $\Tr_{NS}$ are taken over the Hilbert space of SCFT on a circle, with fermions satisfying periodic or anti-periodic boundary conditions respectively. $F$ is the fermion number, and the parameter
\begin{equation}q=e^{2\pi i\tau}\end{equation}
specifies the complex structure of a torus. $H_L$ is the left-moving Hamiltonian, $H_R$ and $J_R$ are the right-moving Hamiltonian and $U(1)_R$ charge operator, $f_i$'s are the Cartan generators of $\CF$, and $a_i$ are corresponding fugacities. The collection of fugacities $\ba\equiv \{a_i\}$ can be understood as the element of the maximal torus of $\CF$. By the usual argument both elliptic genera are independent of $\bar{q}$.

The contribution of a chiral multiplet $\Phi$ transforming in a representation $\CR$ is
\begin{equation}
  \CI^{(0,2),R}_{\Phi,\CR}(\bx;q)=\prod_{\rho\in\CR}\frac{1}{\tth(\bx^\rho;q)},\quad
  \CI^{(0,2),NS}_{\Phi,\CR}(\bx;q)=\prod_{\rho\in\CR}\frac{1}{\theta(q^{\frac{r}{2}}\bx^\rho;q)}.
\end{equation}
Where whe product is over the weights of $\rho$ of the representation $\CR$, and $\bx^\rho\equiv \prod _i x_i^{\langle f_i,\rho\rangle}$ denotes the standard pairing between an element of the maximal torus and a weight. The contribution of a Fermi multiplet $\Psi$ in a representation $\CR$ is
\begin{equation}
  \CI^{(0,2),R}_{\Psi,\CR}(\bx;q)=\prod_{\rho\in\CR}(-\tth(\bx^\rho;q)),\quad
  \CI^{(0,2),NS}_{\Psi,\CR}(\bx;q)=\prod_{\rho\in\CR}{\theta(q^{\frac{r+1}{2}}\bx^\rho;q)}.
\end{equation}
The theta function is defined as
\begin{equation}
  \theta(x;q)=(x;q)(q/x;q), \quad\quad \tth(x;q)=x^{-\half}\theta(x;q),
\end{equation}
where
\begin{equation}
 (x;q)=\prod_{i=0}^{\infty}(1-xq^i).
\end{equation}
Notice that the NS-NS elliptical genera for chiral and Fermi multiplet depend on the right-moving $J_R$-charge $r$ of the multiplet.

The contribution of a vector multiplet $\Lambda$ with gauge group $G$ is
\begin{equation}
\begin{split}
  &\CI^{(0,2),R}_{\Lambda,G}(\bz;q)=(q;q)^{2\,\mathrm{rk}\,G}\prod_{\substack{\alpha\in \text{adj}_G \\ \alpha\neq 0 }}(-\tth(\bz^\alpha;q)),\\
  &\CI^{(0,2),NS}_{\Lambda,G}(\bz;q)=(q;q)^{2\,\mathrm{rk}\,G}\prod_{\substack{\alpha\in \text{adj}_G \\ \alpha\neq 0 }}\theta(\bz^\alpha;q).
\end{split}
\end{equation}
Here $\mathrm{rk}\,G$ is the rank of gauge group $G$ and $\bz$ is the element of the maximal torus of the gauge group $G$.

The elliptic genus does not depend on the coupling of the theory, therefore it is always possible to compute it in the free theory limit. For a $(0,2)$ gauge theory with gauge group $G$, chiral multiplets $\{\Phi\}$ and Fermi multiplets $\{\Psi\}$, the elliptic genus of the theory is \cite{Gadde:2013wq,Benini:2013nda,Gadde:2013dda,Benini:2013xpa}:
\begin{multline}
  \CI^{(0,2),R|NS}(\ba;q)=
  \frac{1}{W(G)}\int\limits_\text{JK}\prod_{i=1}^{\mathrm{rk}\,G}\frac{d z_i}{2\pi i z_i}
             \CI^{(0,2),R|NS}_{V,G}(\bz;q)\times \\
             \prod_{\Phi}\CI^{(0,2),R|NS}_{\Phi}(\{\ba,\bz\};q)
    \prod_{\Psi}\CI^{(0,2),R|NS}_{\Psi}(\{\ba,\bz\};q)
\end{multline}
where $W(G)$ is the order of Weyl group of $G$. The integral is performed over a certain contour ``JK'' in the moduli space of flat connections on the two-torus $\CM_\text{flat}(T^2_\tau,G)$ which corresponds to taking a sum of Jeffrey-Kirwan residues. The absence of gauge anomaly is equivalent to the condition that the integrand is elliptic in $\bz$.

\subsubsection*{Elliptic genus for $\CN=(0,4)$ theory}

To compute the elliptic genus for two-dimensional theories with $(0,4)$ supersymmetry, one can first decompose the $(0,4)$ supersymmetric algebra into its $(0,2)$ subalgebra. The $R$-symmetry of $(0,4)$ is $SU(2)^-_R\times SU(2)^+_R$ from which the combination $J_R=(1-\alpha)R^-+(1+\alpha)R^+$ is chosen as $(0,2)$ $R$-charge. The other combination $R_v=2(R^--R^+)$ can be treated as a global symmetry in $(0,2)$ algebra.

With the embedding of $(0,2)$ algebra into $(0,4)$ algebra and the decomposition of $(0,4)$ multiplets discussed in appendix \ref{appendix04}, one can write down the elliptic genus for $(0,4)$ multiplets. For half-hyper multiplets we have
\begin{equation}
  \CI^{(0,4),R}_{\Phi,\CR}(\bx;q)=\prod_{\rho\in\CR}\frac{1}{\tth(v\bx^\rho;q)},\quad
  \CI^{(0,4),NS}_{\Phi,\CR}(\bx;q)=\prod_{\rho\in\CR}\frac{1}{\theta(q^{\frac{1-\alpha}{4}}v\bx^\rho;q)},
\end{equation}
where the fugacity $v$ labels the anti-diagonal Cartan $F$ of $SU(2)^-_R\times SU(2)^+_R$ mentioned above. For half twisted-hyper,
\begin{equation}
  \CI^{(0,4),R}_{\Phi',\CR}(\bx;q)=\prod_{\rho\in\CR}\frac{1}{\tth(v^{-1}\bx^\rho;q)},\quad
  \CI^{(0,4),NS}_{\Phi',\CR}(\bx;q)=\prod_{\rho\in\CR}\frac{1}{\theta(q^{\frac{1+\alpha}{4}}v^{-1}\bx^\rho;q)}.
\end{equation}
The elliptic genus of $(0,4)$ Fermi multiplet is
\begin{equation}
  \CI^{(0,4),R}_{\Psi,\CR}(\bx;q)=\prod_{\rho\in\CR}(-\theta(\bx^\rho;q)),\quad
  \CI^{(0,4),NS}_{\Psi,\CR}(\bx;q)=\prod_{\rho\in\CR}{\theta(q^{\frac{1}{2}}\bx^\rho;q)}.
\end{equation}
And finally the vector multiplet,
\begin{equation}
\begin{split}
  &\CI^{(0,4),R}_{\Lambda,G}(\bz;q)=(\tth(v^{-2};q))^{\mathrm{rk}\,G}\prod_{\substack{\alpha\in \text{adj}_G \\ \alpha\neq 0 }}\tth(v^{-2}\bz^\alpha;q)\tth(\bz^\alpha;q),\\
  &\CI^{(0,4),NS}_{\Lambda,G}(\bz;q)=(\theta(q^{\frac{1+\alpha}{2}}v^{-2};q))^{\mathrm{rk}\,G}\prod_{\substack{\alpha\in \text{adj}_G \\ \alpha\neq 0 }}\theta(q^{\frac{1+\alpha}{2}}v^{-2}\bz^\alpha;q)\theta(\bz^\alpha;q).
\end{split}
\end{equation}
Notice that in the main text we simply choose $\alpha=1$.

\subsubsection*{Elliptic genus for $\CN=(2,2)$ theory}

In $(2,2)$ theory there are chiral and vector multiplets. $(2,2)$ chiral multiplet decomposes into a $(0,2)$ chiral and a $(0,2)$ Fermi, while a $(2,2)$ vector multiplet is composed of a $(0,2)$ vector and a $(0,2)$ Fermi, therefore one can write down the elliptic genus for $(2,2)$ theory accordingly. Here we just summarize the results, details can be found in \cite{Benini:2013nda,Gadde:2013dda,Benini:2013xpa}.
\begin{equation}
  \CI^{(2,2),R}_{\Phi,\CR}(\bx;q)=\prod_{\rho\in\CR}\frac{\tth(y^{R/2-1}\bx^\rho;q)}{\tth(y^{R/2}\bx^\rho;q)},\quad
  \CI^{(2,2),NS}_{\Phi,\CR}(\bx;q)=\prod_{\rho\in\CR}\frac{\theta(q^{\half(R/2+1)} y^{R/2-1}\bx^\rho;q)}{\theta(q^{R/4} y^{R/2}\bx^\rho;q)},
\end{equation}
where the fugacity $v$ labels the anti-diagonal Cartan $F$ of $SU(2)^-_R\times SU(2)^+_R$ mentioned above. And the vector multiplet,
\begin{equation}
\begin{split}
  &\CI^{(2,2),R}_{\Lambda,G}(\bz;q)=\left(\frac{(q;q)^2}{\tth(y^{-1};q)}\right)^{\mathrm{rk}\,G}\prod_{\substack{\alpha\in \text{adj}_G \\ \alpha\neq 0 }}\frac{\tth(\bz^\alpha;q)}{\tth(y^{-1}\bz^\alpha;q)},\\
  \\
  &\CI^{(2,2),NS}_{\Lambda,G}(\bz;q)=\left(\frac{(q;q)^2}{\theta(q^\half y^{-1};q)}\right)^{\mathrm{rk}\,G}\prod_{\substack{\alpha\in \text{adj}_G \\ \alpha\neq 0 }}\frac{\theta(\bz^\alpha;q)}{\theta(q^\half y^{-1}\bz^\alpha;q)}.
\end{split}
\end{equation}
In NS-NS index we sometimes use a new fugacity $t=q^\half/y$ instead of $y$.

\subsubsection*{Elliptic genus for $\CN=(4,4)$ theory}

In $(4,4)$ theory there are also hyper multiplets and vector multiplets like $(0,4)$ cases. The single letter indices for half hyper multiplets are
\begin{equation}
  \CI^{(4,4),R}_{\Phi,\CR}(\bx;q)=\prod_{\rho\in\CR}\frac{\tilde{\theta}(u\bx^\rho;q)}{\tilde{\theta}(v\bx^\rho;q)},\quad
  \CI^{(4,4),NS}_{\Phi,\CR}(\bx;q)=\prod_{\rho\in\CR}\frac{\theta(u\bx^\rho;q)}{\theta(v\bx^\rho;q)},
\end{equation}
the single letter indices for vector multiplets are
\begin{equation}
\begin{split}
  \CI^{(4,4),R}_{\Lambda,G}(\bz;q)=&\left(\frac{\tth(v^{-2};q)}{\tth(uv^{-1};q)\tth(u^{-1}v^{-1};q)}\right)^{\mathrm{rk}\,G}\prod_{\substack{\alpha\in \text{adj}_G \\ \alpha\neq 0 }}\frac{\tth(v^{-2}\bz^\alpha;q)\tth(\bz^\alpha;q)}{\tth(uv^{-1}\bz^\alpha;q)\tth(u^{-1}v^{-1}\bz^\alpha;q)},\\
  \\
  \CI^{(4,4),NS}_{\Lambda,G}(\bz;q)=&\left(\frac{\theta(qv^{-2};q)}{\theta(q^{\half}uv^{-1};q)\theta(q^{\half}u^{-1}v^{-1};q)}\right)^{\mathrm{rk}\,G}\prod_{\substack{\alpha\in \text{adj}_G \\ \alpha\neq 0 }}\frac{\theta(qv^{-2}\bz^\alpha;q)\theta(\bz^\alpha;q)}{\theta(q^{\half}uv^{-1}\bz^\alpha;q)\theta(q^{\half}u^{-1}v^{-1}\bz^\alpha;q)}.
\end{split}
\end{equation}

\section{'t Hooft anomalies}

\label{appendixanomalies}

In theories with chiral supersymmetry left- and right-moving fermions are not necessarily paired together, which in general results in non-trivial 't Hooft anomalies. Suppose the theory under consideration has a global symmetry with corresponding simple Lie group $F$. Then its anomaly coefficient $k_F$ is given by the following formula:
\begin{equation}
 \Tr\gamma^3F^aF^b =k_F\delta^{ab},
\end{equation}
where $F^a$ are the generators of $F$, $\gamma^3$ is the gamma matrix measuring chirality and the trace is performed over the space of Weyl Fermi fields of the theory. It follows that the anomaly coefficient $k_F$ can be calculated as the following difference between sums over the sets of (0,2) chiral and Fermi multiplets of the theory:
\begin{equation}
 k_F=\sum\limits_{\Phi\in\text{(0,2) chiral}}T(R^\Phi_{F})-\sum\limits_{\Gamma\in\text{(0,2) Fermi}}T(R^\Gamma_{F}),
\end{equation}
where $T(R^{\,\cdot}_F)$ denotes the index of representation $R^{\,\cdot}_F$ of $F$. For example, $T[\square_{SU(N)}]=1/2$ and $T[\text{adj}_{SU(N)}]=N$. In the case when the theory has two $U(1)$ symmetries $U(1)_{F_{1,2}}$ with corresponding charges $F_{1,2}$, there can be a mixed 't Hooft anomaly:
\begin{equation}
 k_{F_1\cdot F_2}=\Tr\gamma^3F_1F_2.
\end{equation}
However, unlike in 4d there cannot be a mixed anomaly between $SU(N)$ and other global symmetry.

In the IR one usually expects the current corresponding to the global symmetry to become holomorphic or anti-holomorphic (i.e. left- or right-moving). In this case $F$ enhances to the corresponding affine algebra $\widehat{F}_{|2k_F|}$ acting in the holomorphic or anti-holomorphic sector of the CFT depending on the sign of $k_F$. However, holomorphicity of the current in the IR may fail if the flavor symmetry rotates non-compact directions of the moduli space, the simplest example being $U(1)$ symmetry acting on a free chiral multiplet.

The anomaly coefficient determines transformation properties of the index w.r.t. to corresponding fugacities. The index can be considered as a meromorphic section of $\CL^{-2k_F}$ where $\CL$ is a prequantum line bundle over $\CM_{\text{flat}}(T^2_{\tau},F)$, the moduli space of flat connections of $F$-bundle over the two-torus with complex structure $\tau$. Consider for example the case $F=SU(n)$. Let us denote the corresponding fugacities by $\ba=\{a_i\}_{i=1}^N$, $\prod_i a_i=1$. Then the index has the following properties:
\begin{equation}
I(\ba|_{a_i\leftrightarrow a_j})=I(\ba),
\qquad I(\ba|_{a_i\rightarrow qa_i,a_j\rightarrow a_j/q})=(qa_i/a_j)^{2k_F}I(\ba).
\end{equation}

Since $\CN=2$ or small $\CN=4$ SCA algebra of the IR SCFT has only one central element, the anomaly of the R-symmetry can be related to the the right-moving central charge. Namely, in the case of $\CN=2$ SCA:
\begin{equation}
 c_R=3k=3\Tr \gamma^3R^2,
\end{equation}
where $R$ is the generator of $U(1)$ R-symmetry and $k$ is the level of affine $\widehat{U(1)}$ R-symmetry. In the case of small $\CN=4$ SCA:
\begin{equation}
 c_R=6k=6\cdot (2k_R),
\end{equation}
where $k$ is the level of affine $\widehat{SU(2)}$ R-symmetry and $k_R$ is the corresponding anomaly coefficient which usually can be easily computed in the UV. Once $c_R$ is known the left-moving central charge can be easily determined from the gravitational anomaly:
\begin{equation}
 c_L-c_R=\Tr\gamma^3.
\end{equation}

\section{Proof of the elliptic inversion formula}
\label{appendixinversion}

\begin{defn}
 Let $\CH^{(m)}_{SU(2)}$ be the space of meromorphic sections with simple poles\footnote{We make this assumption for technical simplicity. The case with higher order poles can always be considered as a limit when simple poles collide.} on $\CL^{-m}$ where $\CL$ is the prequantum line bundle on $\CM_\mathrm{flat}(T^2_\tau,SU(2))\cong T^2_\tau/\mathbb{Z}_2$. More explicitly\footnote{cf appendix \ref{appendixanomalies}},
 \begin{equation}
   \CH_{SU(2)}^{(m)}\equiv \{ f:\mathbb{C}^* \rightarrow \mathbb{C}\,|\, f(z)=f(1/z),\,f(qz)=q^{m}z^{2m}f(z)\}.
 \end{equation}
\end{defn}

\begin{prop}
 If $f\in \CH^{(m)}_{SU(2)},\;m>0$ has no poles, it is zero.
 \label{prop-1}
\end{prop}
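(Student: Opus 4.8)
The plan is to show that the quasi-periodicity condition alone is incompatible with holomorphy, so that the reflection symmetry $f(z)=f(1/z)$ built into the definition of $\CH^{(m)}_{SU(2)}$ will play no role. The key observation is that in the automorphy factor $f(qz)=q^m z^{2m}f(z)$ the power of $z$ is \emph{positive}, namely $+2m$. Geometrically, $f$ is a section of a line bundle of degree $-2m<0$ on the elliptic curve $\IC^*/q^{\IZ}$, and such a bundle has no nonzero holomorphic sections. I would turn this heuristic into an elementary argument in one of two ways.

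First approach (Laurent coefficients). Since $f$ has no poles it is holomorphic on all of $\IC^*$ and hence has a Laurent expansion $f(z)=\sum_{n\in\IZ}c_n z^n$ converging for every $0<|z|<\infty$; in particular $c_n\to 0$ as $n\to\pm\infty$. Substituting into $f(qz)=q^m z^{2m}f(z)$ and comparing coefficients of $z^n$ gives the recursion $c_n=q^{\,m-n}c_{n-2m}$, which decouples into independent chains indexed by the residue $r$ of $n$ modulo $2m$. Solving one chain yields $c_{r+2mk}=q^{-mk^2-kr}c_r$, whose modulus grows like $|q|^{-mk^2}\to\infty$ as $k\to+\infty$ because $|q|<1$. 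This contradicts $c_{r+2mk}\to 0$ unless $c_r=0$, and since $r$ was arbitrary every coefficient vanishes, so $f\equiv 0$.

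Alternatively (zero counting). One can integrate $d\log f$ over the boundary of a fundamental annulus $\{|q|<|z|\le 1\}$; identifying the inner circle with the outer one via $z\mapsto qz$ and using $d\log(q^m z^{2m})=2m\,dz/z$ shows that the number of zeros minus the number of poles of $f$ in the annulus equals $-2m$. A nonzero holomorphic $f$ has no poles and a nonnegative number of zeros, which is again impossible, forcing $f\equiv 0$.

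The step requiring the most care is the convergence/growth estimate (respectively, the verification that the radius of the contour can be chosen to avoid the zeros of $f$); the algebra of the recursion and of the contour integral is otherwise routine. I would present the Laurent-coefficient version as the main argument, since it is self-contained and makes the negative-degree obstruction completely explicit, and mention the zero-counting version as a geometric cross-check.
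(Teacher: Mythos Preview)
Your proposal is correct. Both the Laurent-coefficient argument and the zero-counting argument are valid; the recursion $c_{r+2mk}=q^{-mk^2-kr}c_r$ is computed correctly, and since $|q|<1$ and $m>0$ the quadratic term dominates, forcing $c_r=0$.

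The paper takes a different route: it multiplies $f$ by $(\theta(z)\theta(1/z))^m$, whose automorphy factor is exactly $q^{-m}z^{-2m}$, so that the product $\tilde f$ becomes genuinely elliptic. An elliptic function with no poles is constant, and that constant must vanish because otherwise $f=C/(\theta(z)\theta(1/z))^m$ would acquire poles at the zeros of $\theta$. Your approach and the paper's are really two implementations of the same geometric fact you identified (holomorphic sections of a degree $-2m<0$ bundle on the torus vanish): you prove it from scratch, while the paper leverages the theta-function machinery already omnipresent in the text to reduce to Liouville on the torus. Your argument is more self-contained; the paper's is shorter in context and makes the $SU(N)$ generalization mentioned later more transparent, since one simply inserts the appropriate product of theta functions.
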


\begin{proof}
 Consider $\tilde{f}(z)=f(z)(\theta(z)\theta(1/z))^m$. It is an elliptic function without poles, therefore it must be constant: $\tilde{f}(z)\equiv C$. Since $f(z)$ has no poles $C=0$\footnote{In other words, $f$ is a section of a line bundle over $\CM_\mathrm{flat}(T^2_\tau,SU(2))$ with divisor $-m\cdot \mathrm{pt}$ and therefore it must have at least $m$ poles}.
\end{proof}

It follows that in order to prove the equality of two functions with positive anomaly coefficients and simple poles it is sufficient to check that they have the same poles and residues. In particular, it is easy to show that
\begin{prop}
\label{prop-decomp}
 If $f\in \CH^{(1)}_{SU(2)}$, $\exists \,A_i,t_i$ (unique up to a $\mathbb{Z}_2$ action) such that
 \begin{equation}
  f(z)=\sum_{i}\frac{A_i}{\theta(t_iz)\theta(t_i/z)}.
  \label{prop-decomp-expr}
 \end{equation}
\end{prop}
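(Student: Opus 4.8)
The plan is to reduce the statement to Proposition~\ref{prop-1}: I will exhibit an elementary function with a single prescribed simple pole, subtract off a linear combination of such functions to kill all poles of $f$, and then invoke Proposition~\ref{prop-1} to conclude that the remainder vanishes.

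First I would introduce the building block
\[
 g_t(z)\equiv \frac{1}{\theta(tz)\,\theta(t/z)}
\]
and verify that $g_t\in\CH^{(1)}_{SU(2)}$ for generic $t$. Weyl invariance $g_t(z)=g_t(1/z)$ is immediate from the definition. For the quasi-periodicity I would use the elementary identities $\theta(qx)=-x^{-1}\theta(x)$ and $\theta(q^{-1}x)=-q^{-1}x\,\theta(x)$, which give $\theta(qtz)=-(tz)^{-1}\theta(tz)$ and $\theta(t/(qz))=-\tfrac{t}{qz}\theta(t/z)$; multiplying the two denominator factors produces $g_t(qz)=qz^2\,g_t(z)$, which is exactly the $m=1$ defining condition. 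Since $\theta(x)$ has simple zeros precisely at $x\in q^{\mathbb{Z}}$, on the quotient $\CM_\mathrm{flat}(T^2_\tau,SU(2))\cong T^2_\tau/\mathbb{Z}_2$ the function $g_t$ has a single simple pole, sitting at the Weyl orbit $\{t,1/t\}$, with nonzero residue.

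For existence I would note that any $f\in\CH^{(1)}_{SU(2)}$ descends to a meromorphic section of a line bundle over the compact torus $T^2_\tau=\mathbb{C}^*/q^{\mathbb{Z}}$, hence has only finitely many poles in a fundamental domain; Weyl invariance organizes these simple poles into finitely many orbits $\{t_i,1/t_i\}$. None can sit at a $\mathbb{Z}_2$-fixed point, since there $f(z)=f(1/z)$ forces the residue of a simple pole to vanish. For each orbit I set $A_i\equiv \mathrm{Res}_{z=t_i}f\,/\,\mathrm{Res}_{z=t_i}g_{t_i}$, and then $h\equiv f-\sum_i A_i\,g_{t_i}\in\CH^{(1)}_{SU(2)}$ has no poles, so Proposition~\ref{prop-1} gives $h\equiv 0$, i.e. the identity~(\ref{prop-decomp-expr}). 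Uniqueness then follows by reading off the data of the right-hand side: its poles are exactly the Weyl orbits $\{t_i,1/t_i\}$ and the residue at each fixes $A_i$, so the pairs $(A_i,t_i)$ are determined up to the replacement $t_i\mapsto 1/t_i$ and reordering — precisely the stated $\mathbb{Z}_2$ ambiguity.

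The main obstacle I anticipate is the residue bookkeeping under the Weyl group, namely confirming that a \emph{single} coefficient $A_i$ per orbit suffices to cancel both poles at $t_i$ and at $1/t_i$. This works because, for any Weyl-invariant function, the change of variable $z\mapsto 1/z$ relates the residue at $1/t_i$ to the residue at $t_i$ by the universal factor $-t_i^{-2}$; since this factor depends only on $t_i$ and not on the function, matching the residues of $f$ and $g_{t_i}$ at $t_i$ automatically matches them at $1/t_i$. The remaining care is simply to check that $g_t$ has a genuinely simple (not double) pole for generic $t$, which fails only on the measure-zero locus where $t^2\in q^{\mathbb{Z}}$; everything else reduces directly to Proposition~\ref{prop-1}.
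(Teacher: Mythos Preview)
Your proposal is correct and follows exactly the route the paper intends: the paper does not give a proof beyond the remark preceding the proposition that, by Proposition~\ref{prop-1}, it suffices to match poles and residues, and your argument simply carries this out in detail. The extra checks you supply (quasi-periodicity of $g_t$, the automatic residue matching at the partner pole $1/t_i$ via Weyl invariance, and the exclusion of simple poles at the $\mathbb{Z}_2$-fixed points) are precisely the points one has to verify to justify the paper's ``it is easy to show.''
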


\begin{lem}
\begin{equation}
  \frac{(q;q)^2}{2}\int\limits_\mathrm{JK}\frac{d\xi}{2\pi i\xi}\,\theta(\xi^{\pm 2})
  \,\frac{\theta(x^2)}{\theta(x\,a^{\pm 1}\xi^{\pm 1})}
  \,\frac{\theta(y^2)}{\theta(y\,\xi^{\pm 1}b^{\pm 1})}=
  \,\frac{\theta(x^2y^2)}{\theta(xy\,a^{\pm 1}b^{\pm 1})}
  \label{inv-basic}
\end{equation}
\end{lem}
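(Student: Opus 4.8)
The plan is to treat both sides of \eqref{inv-basic} as functions of the single variable $a$, regarding $x,y,b$ as generic parameters, and to apply Proposition \ref{prop-1}. First I would verify that both sides lie in $\CH^{(2)}_{SU(2)}$ as functions of $a$. Invariance under $a\to 1/a$ is manifest on both sides, since every $a$-dependent theta factor already appears in the symmetric combination $a^{\pm 1}$. For the shift $a\to qa$ I would use the quasi-periodicity $\theta(qu)=-u^{-1}\theta(u)$: a direct count on the right-hand side gives $\mathrm{RHS}(qa)=q^{2}a^{4}\,\mathrm{RHS}(a)$, i.e. the weight is $m=2$. On the left-hand side the integrand picks up exactly the same overall factor $q^{2}a^{4}$ (the $y,b$ part, $\theta(x^2)$, $\theta(y^2)$, $\theta(\xi^{\pm2})$ and $d\xi/\xi$ are all $a$-independent), and since the integrand is elliptic in $\xi$ — the anomaly-free condition recalled in appendix \ref{appendixanomalies} — the Jeffrey--Kirwan residue sum inherits this transformation. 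Hence the left-hand side also lies in $\CH^{(2)}_{SU(2)}$.

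By the remark following Proposition \ref{prop-1}, it then suffices to show that the difference of the two sides has \emph{no} poles in $a$: being a holomorphic element of $\CH^{(2)}_{SU(2)}$, it is automatically identically zero. So the task reduces to locating the poles in $a$ and matching residues. On the right-hand side the poles come from the zeros of $\theta(xya^{\pm1}b^{\pm1})$, i.e. (modulo the $\mathbb{Z}_2$ action $a\to1/a$) at $a=1/(xyb)$ and $a=b/(xy)$. For the left-hand side I would analyze the pinching of the JK contour: the $a$-dependent positive-charge poles at $\xi=1/(xa)$ and $\xi=a/x$ can collide with the fixed negative-charge poles at $\xi=yb$ and $\xi=y/b$, and a short computation shows that every genuine pinch occurs precisely at $a=1/(xyb)$ or $a=b/(xy)$ together with their $\mathbb{Z}_2$ images. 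In particular the apparent poles of individual residue terms at other loci (such as $a=yb/x$) are spurious and must cancel in the full sum, consistently with the absence of a pinch there.

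A useful simplification is that both sides are invariant under $b\to 1/b$ (the factors $\theta(yb^{\pm}\xi^{\pm})$ and $\theta(xya^{\pm1}b^{\pm1})$ are symmetric in $b^{\pm1}$), and this involution exchanges the two pole loci $a=1/(xyb)\leftrightarrow a=b/(xy)$. It is therefore enough to match the residue at the single point $a=1/(xyb)$. There, two of the four JK residue terms are singular — the fixed-pole term at $\xi=1/(yb)$ and the moving-pole term at $\xi=1/(xa)$ — and I would take their residues in $a$, add them, and compare with the residue of $\theta(x^2y^2)/\theta(xya^{\pm1}b^{\pm1})$. Using $\mathrm{Res}_{u=1}\theta(u)^{-1}=-1/(q;q)^{2}$, the latter collapses to
\[
\mathrm{Res}_{a=1/(xyb)}\,\mathrm{RHS}=\frac{-1}{xyb\,(q;q)^{2}}\cdot\frac{1}{\theta(b^{-2})\,\theta(x^2y^2b^2)}\,,
\]
and the two-term sum on the left should reproduce exactly this expression.

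The main obstacle is the residue bookkeeping in this last step: one must correctly identify which JK terms are singular at the chosen pole, verify the cancellation of the spurious singularities, and then carry out the theta-function algebra — repeatedly using $\theta(qu)=-u^{-1}\theta(u)$ and $\theta(u^{-1})=-u^{-1}\theta(u)$ — to combine the fixed-pole and moving-pole contributions into the single product above. Everything else is structural and follows at once from Proposition \ref{prop-1}.
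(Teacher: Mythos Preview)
Your approach is essentially the paper's: both sides are treated as elements of $\CH^{(2)}_{SU(2)}$ in the variable $a$, and Proposition \ref{prop-1} reduces the identity to matching poles and residues. The paper differs only in that it first writes out the four Jeffrey--Kirwan residues at $\xi=xa^{\pm1}$ and $\xi=yb^{\pm1}$ explicitly (as a displayed four-term sum) and then asserts the pole/residue match, whereas you organize the same comparison via contour pinching and exploit the $b\leftrightarrow 1/b$ symmetry to halve the work.
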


\begin{proof}
By definition the integral on left hand side is given by a  residues at $\xi=xa^{\pm 1}$ and $\xi=yb^{\pm 1}$:
 \begin{multline}
\frac{\theta \left(y^2\right) \theta \left(\frac{a^2}{x^2}\right)}{2 \theta \left(a^2\right) \theta \left(\frac{a y}{b x}\right) \theta \left(\frac{a b y}{x}\right) \theta \left(\frac{x y}{a b}\right) \theta \left(\frac{b x y}{a}\right)}+\frac{\theta \left(y^2\right) \theta \left(\frac{1}{a^2 x^2}\right)}{2 \theta \left(\frac{1}{a^2}\right) \theta \left(\frac{y}{a b x}\right) \theta \left(\frac{b y}{a x}\right) \theta \left(\frac{a x y}{b}\right) \theta (a b x y)}+
\\
\frac{\theta \left(x^2\right) \theta \left(\frac{b^2}{y^2}\right)}{2 \theta \left(b^2\right) \theta \left(\frac{b x}{a y}\right) \theta \left(\frac{a b x}{y}\right) \theta \left(\frac{x y}{a b}\right) \theta \left(\frac{a x y}{b}\right)}+\frac{\theta \left(x^2\right) \theta \left(\frac{1}{b^2 y^2}\right)}{2 \theta \left(\frac{1}{b^2}\right) \theta \left(\frac{x}{a b y}\right) \theta \left(\frac{a x}{b y}\right) \theta \left(\frac{b x y}{a}\right) \theta (a b x y)}.
\label{basic-int-expression}
 \end{multline}
It is easy to show that, as a function of $a$ which belongs to $\CH^{(2)}_{SU(2)}$, it has the same poles and residues as the right hand side of (\ref{inv-basic}). By Prop. \ref{prop-1} the difference between (\ref{basic-int-expression}) and the right hand side of (\ref{inv-basic}) is zero.
\end{proof}

The formula (\ref{inv-basic}) is a particular case of (\ref{index-KN-int}) for $N=2$. Now it is easy to prove the following statement:

 \begin{thm}
 For any $f\in \CH^{(2)}_{SU(2)}$
\begin{equation}
  \frac{(q;q)^4}{4}\int\limits_\mathrm{JK}\frac{d\xi}{2\pi i\xi}\,\int\limits_\mathrm{JK}\frac{d\zeta}{2\pi i\zeta}\,\theta(\xi^{\pm 2})\,\theta(\zeta^{\pm 2})
  \,\frac{\theta(v^{-2})}{\theta(v^{-1}\,z^{\pm 1}\xi^{\pm 1})}
  \,\frac{\theta(v^{2})}{\theta(v\,\xi^{\pm 1}\zeta^{\pm 1})}f(\zeta)=
  f(z)
  \label{inv-formula}
\end{equation}
\end{thm}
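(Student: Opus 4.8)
The plan is to read the left-hand side of \eqref{inv-formula} as the composition $T_-\circ T_+$ of two integral operators built from the kernel of \eqref{inv-basic}. I would set
\begin{equation}
(T_+h)(\xi)=\frac{(q;q)^2}{2}\int\limits_\mathrm{JK}\frac{d\zeta}{2\pi i\zeta}\,\theta(\zeta^{\pm2})\,\frac{\theta(v^2)}{\theta(v\,\xi^{\pm1}\zeta^{\pm1})}\,h(\zeta),
\end{equation}
\begin{equation}
(T_-g)(z)=\frac{(q;q)^2}{2}\int\limits_\mathrm{JK}\frac{d\xi}{2\pi i\xi}\,\theta(\xi^{\pm2})\,\frac{\theta(v^{-2})}{\theta(v^{-1}\,z^{\pm1}\xi^{\pm1})}\,g(\xi),
\end{equation}
so that, since $\tfrac{(q;q)^2}{2}\cdot\tfrac{(q;q)^2}{2}=\tfrac{(q;q)^4}{4}$, the assertion \eqref{inv-formula} is exactly $T_-T_+=\mathrm{id}$ on $\CH^{(2)}_{SU(2)}$. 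First I would record that for $h\in\CH^{(2)}_{SU(2)}$ the $\zeta$-integrand is elliptic: the factors $\theta(\zeta^{\pm2})$, the kernel, and $h$ pick up $q^{-4}\zeta^{-8}$, $q^{2}\zeta^{4}$, and $q^{2}\zeta^{4}$ respectively under $\zeta\to q\zeta$, which cancel. Hence the Jeffrey--Kirwan prescription is well defined and $T_+$ (and likewise $T_-$) maps $\CH^{(2)}_{SU(2)}$ into itself.

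By linearity of $T_\pm$ it then suffices to verify $T_-T_+f=f$ on a spanning family. I would use the building blocks
\begin{equation}
\phi_{y,b}(\zeta)=\frac{\theta(y^2)}{\theta(y\,b^{\pm1}\zeta^{\pm1})}=\frac{\theta(y^2)}{\theta(yb\,\zeta^{\pm1})\,\theta((y/b)\,\zeta^{\pm1})},
\end{equation}
whose denominator picks up the factor $q^{-2}\zeta^{-4}$ under $\zeta\to q\zeta$, so that $\phi_{y,b}\in\CH^{(2)}_{SU(2)}$. As $(y,b)$ vary these realize simple poles at arbitrary $\mathbb{Z}_2$-orbits $\{(yb)^{\pm1}\}$, $\{(y/b)^{\pm1}\}$, and a partial-fraction argument parallel to Proposition~\ref{prop-decomp} shows they span the simple-pole elements of $\CH^{(2)}_{SU(2)}$. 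Applying \eqref{inv-basic} with $(x,a)=(v,\xi)$ and second vertex $\phi_{y,b}$ gives $T_+\phi_{y,b}=\phi_{vy,b}$, i.e. $y$ is rescaled to $vy$ while $b$ is left untouched; applying \eqref{inv-basic} again with $(x,a)=(v^{-1},z)$ and second vertex $\phi_{vy,b}$ gives $T_-\phi_{vy,b}=\phi_{y,b}$, because $v^{-1}\cdot vy=y$ and $v^{-2}(vy)^2=y^2$. Thus $T_-T_+\phi_{y,b}=\phi_{y,b}$ for every building block, and by linearity and spanning $T_-T_+=\mathrm{id}$, which is \eqref{inv-formula}.

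The delicate point — and the reason the two operators are applied sequentially rather than collapsed at once — is that one may \emph{not} perform the $\xi$-integral first by a single use of \eqref{inv-basic}. Doing so corresponds to the degenerate case $xy=v^{-1}\cdot v=1$, whose right-hand side carries $\theta(x^2y^2)=\theta(1)=0$ and would spuriously annihilate everything. What actually occurs at $xy=1$ is that a pair of poles collides across the Jeffrey--Kirwan contour, so the contour is pinched and \eqref{inv-basic} ceases to apply in that limit; $T_-T_+$ is therefore a genuine reproducing kernel, not the zero operator. The hard part of the argument is consequently twofold: (i) justifying that the sequential evaluation $T_-\circ T_+$ keeps each single integration in the non-degenerate regime where \eqref{inv-basic} is valid, so that Fubini-type reordering is never invoked across the pinch; and (ii) establishing the $\CH^{(2)}_{SU(2)}$ analogue of the decomposition in Proposition~\ref{prop-decomp}, so that checking the identity on the $\phi_{y,b}$ indeed exhausts $\CH^{(2)}_{SU(2)}$. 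Both are routine given \eqref{inv-basic} and Proposition~\ref{prop-1}, but (i) is where the apparent paradox of a vanishing kernel is resolved.
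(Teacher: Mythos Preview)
Your proof is correct and follows essentially the same strategy as the paper: decompose $f\in\CH^{(2)}_{SU(2)}$ into building blocks of the form $\frac{A}{\theta(a z^{\pm1})\theta(t z^{\pm1})}$ and apply Lemma~\eqref{inv-basic} twice, once for each integral. The paper makes the spanning step slightly more explicit by first multiplying $f$ by $\theta(az)\theta(a/z)$ to land in $\CH^{(1)}_{SU(2)}$ and then invoking Proposition~\ref{prop-decomp} verbatim, whereas you invoke a ``parallel'' partial-fraction argument directly in $\CH^{(2)}_{SU(2)}$; these are the same decomposition (your $\phi_{y,b}$ with $yb=a$ fixed and $y/b=t_i$ recovers the paper's terms). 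Your operator framing $T_-T_+=\mathrm{id}$ and the remark on the $xy=1$ pinch are not in the paper but are correct and clarify why the two integrations must be done sequentially.
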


\begin{proof}
Let us pick some $a\in\mathbb{C}^*$ and consider
\begin{equation}
 \tilde{f}(z)=\theta(az)\theta(a/z)f(z)\;\in\;\CH^{(1)}_{SU(2)}.
\end{equation}
Then from Prop. \ref{prop-decomp} it follows that we can always represent $f$ in the following way\footnote{Let us note that the Jeffrey-Kirwan contour integral prescription in (\ref{inv-formula}) requires the choice of $SU(2)$ charges at poles. This choice is made in the formula below by picking particular $(A_i,t_i)$ in $\mathbb{Z}_2$ orbit when using representation (\ref{prop-decomp-expr}). However, the final result obviously does not depend on it.}:
\begin{equation}
 f(z)=\sum_{i}\frac{A_i}{\theta(az)\theta(a/z)\theta(t_iz)\theta(t_i/z)}.
\end{equation}
Plugging it in the left hand side of (\ref{inv-formula}) and applying (\ref{inv-basic}) twice for each term in the sum we get the desired result.
\end{proof}
Let us note that one can easily generalize the above statements for $SU(N)$ case, considering the following space:
\begin{equation}
  \CH_{SU(N)}^{(m)}\equiv \left\{\text{meromorphic sections of } \CL^{-m}\rightarrow \CM_\text{flat}(T^2_\tau,SU(N))\right\}
\end{equation}
and utilizing the identity (\ref{index-KN-int}) for general $N$.

\section{Index of $SU(N)$ $\CN=(0,2)$ gauge theories and 1d TQFT}

Making a simplified analogy with section \ref{section-TQFT}, one can construct a 1d TQFT using (\ref{index-02-vector}) and (\ref{index-KN}). Namely, let us define the Hilbert space associated to a point as a space of meromorphic functions of $SU(N)\times U(1)$ fugacities with fixed $SU(N)$ anomaly coefficient:
\begin{multline}
 \CH_{\pt}^{(0,2)}\equiv
 \{ f: (\mathbb{C}^*)^{N-1}\times \mathbb{C}^* \rightarrow \mathbb{C}\,:\\ \, f(\ba|_{a_i\leftrightarrow a_j};x)=f(\ba;x),\,f(\ba|_{a_i\rightarrow qa_i,a_j\rightarrow a_j/q};x)=(qa_i/a_j)^Nf(\ba;x)\}.
 \label{TQFT1-Hilbert-space}
\end{multline}
Then define the following basic building blocks of 1d TQFT:

\begin{equation}
 \begin{tabular}{|c|c|}
 \hline
$\vcenter{\hbox{\includegraphics[scale=0.7,angle=180]{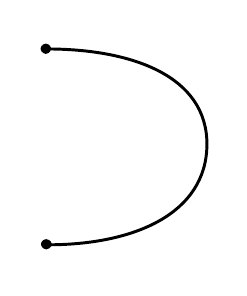}}}$ &
$
\begin{array}{rrcl}
  K:&\, \mathbb{C}& \longrightarrow & \CH_\pt^{(0,2)}\otimes \CH_\pt^{(0,2)} \\
  \\
 & 1 & \longmapsto&\CI^{(0,2)}_{K_N}(\ba,\bb,x\cdot y)
 \end{array}
$
\\
\hline
$\vcenter{\hbox{\includegraphics[scale=0.7]{TQFT1-2map}}}$ &
$
\begin{array}{rrcl}
  \eta:&\,\CH_\pt^{(0,2)}\otimes \CH_\pt^{(0,2)} &\longrightarrow & \mathbb{C}\\
  \\
 & f(\ba,\bb;x,y) & \longmapsto& \frac{1}{N!}\int\limits_\text{JK}\frac{d\bxi}{2\pi i\bxi}\,\CI^{(0,2)}_{V,SU(N)}(\bxi)
 \,f(\bxi,\bxi^{-1};1,1)
 \end{array}
$\\
\hline
\end{tabular}
\label{TQFT1-blocks}
\end{equation}
Again, the last condition in (\ref{TQFT1-Hilbert-space}) is needed for the integrand above to be elliptic. Then (\ref{index-KN-int}) can be formulated as the following property:

\begin{equation}
 \begin{tabular}{|c|c|}
 \hline
$\vcenter{\hbox{\includegraphics[scale=1]{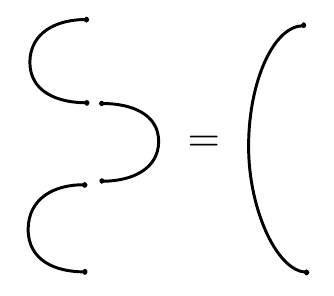}}}$ &
$
(\mathrm{id}\otimes \eta\otimes \mathrm{id})\circ (K\otimes K)=K
  $
\\
\hline
\end{tabular}
\end{equation}
which is equivalent to idempotency of the operator
\begin{equation}
 \begin{tabular}{|c|c|}
 \hline
$\vcenter{\hbox{\includegraphics[scale=1]{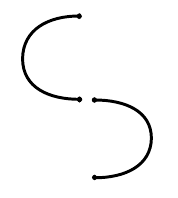}}}$ &
$
\begin{array}{c}
\pi\equiv (\mathrm{id}\otimes \eta) \circ (K\otimes \mathrm{id}):
\CH_\pt^{(0,2)} \longrightarrow \CH_\pt^{(0,2)}, \\
\\
 \pi^2=\pi
\end{array}
  $
\\
\hline
\end{tabular}
\end{equation}
It follows that $\pi$ is a projector and acts as the identity map when restricted on $ \tilde{\CH}_\pt^{(0,2)}\equiv \pi ( \CH_\pt^{(0,2)})$.

%%%%%%%%%%%%%%%%%%%%%%%%%%%%%%%%%%%%%%%%%%%%%%%%%%%%%%%%%

\section{Partial topological twisting of $\CN=2$ $d=4$ theory} \label{sec:twist}

Let us compactify 4d $\CN=2$ theory on a Riemann surface $\CC_g$ of genus $g$ without punctures and take the zero-volume limit to get a 2d theory. In order to preserve supersymmetry, we perform topological twisting along $\CC_g$ \cite{Bershadsky:1995vm}. The symmetry group of the 4d $\CN=2$ superconformal theory includes $SU(2)_L \times SU(2)_R \times SU(2)_I \times U(1)_r$, where $SU(2)_L \times SU(2)_R = SO(4)$ is the Lorentz group and $SU(2)_I \times U(1)_r$ is the R-symmetry group.
Upon dimensional reduction, the symmetry group becomes $SO(2)_E \times SO(2)_\CC \times SU(2)_I \times U(1)_r$, where $SO(2)_E$ and $SO(2)_\CC$ are the Lorentz group along the $\IR^2$ and $\CC_g$ respectively. Now, we perform topological twist along the $\CC_g$ direction. This type of twisting is studied in \cite{Kapustin:2006hi}.

\begin{table}[h]
\be
\begin{array}{c|cccc|cc|cc}
Q & SU(2)_L & SU(2)_R & SU(2)_I & U(1)_r & SO(2)_E & SO(2)_\CC & SO(2)_\CC'  & SO(2)_\CC'' \\
\hline \hline
Q_-^1 & -\half & 0 & \half & \half & - \half & -\half & 0 & 0 \\
Q_+^1& \half & 0 & \half & \half  & \half & \half & 1 & 1 \\
Q_-^2  & -\half & 0 & -\half & \half & -\half & -\half & 0 & -1\\
Q_+^2  & \half & 0 & -\half & \half & \half & \half & 1 & 0 \\
\hline
\tilde{Q}_-^1 & 0 & -\half & \half & -\half & -\half & \half & 0 & 1 \\
\tilde{Q}_+^1& 0 & \half & \half & -\half & \half & -\half & -1 & 0 \\
\tilde{Q}_-^2 & 0 & -\half & -\half & -\half & -\half & \half & 0 & 0 \\
\tilde{Q}_+^2 & 0 & \half & -\half & -\half & \half & -\half & -1 & -1
\end{array} \nn
\ee
\caption{Supercharges of the $d=4, \CN=2$ supersymmetry. Here $SO(2)_\CC'$ is the diagonal of $SO(2)_\CC \times U(1)_r$ and $SO(2)_\CC''$ is the diagonal of $SO(2)_\CC \times SU(2)_I$.}
\label{table:N2susy}
\end{table}

There are two independent choices of twisting. We can twist with either $U(1)_r$ or $SU(2)_I$. If we twist by $U(1)_r$, we get $\CN=(0, 4)$ SUSY in two-dimension since $Q_-^1, Q_-^2, \tilde{Q}_-^1, \tilde{Q}_-^2$ are preserved in 2d. Note that they all have charge $-\half$ under $SO(2)_E$. If we twist with $SU(2)_I$, the conserved supercharges are $Q_-^1, Q_+^2, \tilde{Q}_+^1, \tilde{Q}_-^2$ so that we get $\CN=(2, 2)$. See the table \ref{table:N2susy}. If we consider a linear combination of the two twists, we get $\CN=(0, 2)$ SUSY.

Let us consider twisting the free hypermultiplet and vector multiplet. We first summarize the result in the table \ref{table:twistFree} and then give a detailed account in the following.
\begin{table}[h]
\centering
\begin{tabular}{|c|cc|}
\hline
4d $\CN=2$ & $\CN=(0, 4)$ twist & $\CN=(2, 2)$ twist \\
\hline
hypermultiplet & 1 hyper, $g$ Fermi & $2 \times h^0(\CC_g, K^{\half})$ chiral \\
vector & 1 vector, $g$ twisted hyper & 1 vector, $g$ chiral  \\
\hline
\end{tabular}
\caption{Summary of the partial topological twisting of the free 4d $\CN=2$ multiplets.}
\label{table:twistFree}
\end{table}

\paragraph{$U(1)_r$ twisting}

\begin{table}[h]
\be
\begin{array}{c|cccc|cc|cc}
 & SU(2)_L & SU(2)_R & SU(2)_I & U(1)_r & SO(2)_E & SO(2)_\CC & SO(2)_\CC'  & SO(2)_\CC'' \\
\hline \hline
\psi_\pm & \pm \half & 0 & 0 & - \half & \pm \half & \pm \half & (0, -1) & \pm \half \\
\tilde{\psi}_{\dot{\pm}}^\dagger & 0 & \pm \half & 0 & \half & \pm \half & \mp \half & (0, 1) & \mp \half \\
\psi^\dagger_{\dot{\pm}} & 0 & \pm \half & 0 & \half & \pm \half & \mp \half & (0, 1) & \mp \half \\
\tilde{\psi}_{\pm} & \pm \half & 0 & 0 & -\half & \pm \half & \pm \half & (0, -1) & \pm \half \\
\hline
q & 0 & 0 & \half & 0 & 0 & 0 & 0 & \half \\
\tilde{q}^\dagger & 0 & 0 & -\half & 0 & 0 & 0 & 0 & -\half \\
q^\dagger & 0 & 0 & -\half & 0 & 0 & 0 & 0 & -\half \\
\tilde{q} & 0 & 0 & \half & 0 & 0 & 0 & 0 & \half
\end{array} \nn
\ee
\caption{Twisting hypermultiplets}
\label{table:Hyper}
\end{table}

By looking at the table \ref{table:Hyper}, we see that for the $U(1)_r$ twisting, 4 components $\psi_+, \tilde{\psi}_{+}, q, \tilde{q}$ (and its complex conjugate) form a $(0, 4)$ hypermultiplet in 2d spacetime, and also become scalar on $\CC$.
The other two components $\psi_-, \tilde{\psi}_{-}$ (along with their complex conjugates) form a $(0, 4)$ Fermi multiplet in 2d spacetime since they all become left-handed spinors. They become one-forms on $\CC$. Since $\textrm{dim}H^1 (\CC_g) = 2g$, we get $g$ (complex) Fermi multiplets in 2d.

\begin{table}[h]
\be
\begin{array}{c|cccc|cc|cc}
 & SU(2)_L & SU(2)_R & SU(2)_I & U(1)_r & SO(2)_E & SO(2)_\CC & SO(2)_\CC'  & SO(2)_\CC'' \\
\hline \hline
A_{\a \dot{\b}} & \pm \half & \pm \half & 0 & 0 & (1, -1, 0, 0) & (0, 0, 1, -1) & (0, 0, 1, -1) & (0, 0, 1, -1) \\
\lambda_\pm & \pm \half & 0 & \half & \half & \pm \half & \pm \half & (1, 0) & (1, 0) \\
\tilde{\lambda}_\pm & \pm \half & 0 & -\half & \half & \pm \half & \pm \half & (1, 0) & (0, -1) \\
\lambda^\dagger_{\dot{\pm}} & 0 & \pm \half & -\half & -\half & \pm \half & \mp \half & (-1, 0) & (-1, 0) \\
\tilde{\lambda}^\dagger_{\dot{\pm}} & 0 & \pm \half & \half & -\half & \pm \half & \mp \half & (-1, 0) & (0, 1)\\
\phi & 0 & 0 & 0 & 1 & 0 & 0 & 1 & 0 \\
\phi^\dagger & 0 & 0 & 0 & -1 & 0 & 0 & -1 & 0
\end{array} \nn
\ee
\caption{Twisting vector multiplets}
\label{table:Vector}
\end{table}

The vector multiplets, twisting with $U(1)_r$, give us 1 $(0, 4)$ vector multiplet from $A_{+\dot{+}}, \lambda_-, \tilde{\lambda}_-$ and $g$ $(0, 4)$ twisted hypermultiplets from $A_{+ \dot{-}}, \lambda_+, \tilde{\lambda}_+, \phi$ (and its complex conjugates).

Let us write the charges of the matter content for the $U(1)_r$ twist. Upon partial compactification, the $SU(2)_I$ becomes the two-dimensional $R$-symmetry $SU(2)_R$ and the twisted Lorentz group on the Riemann surface becomes a global (non-$R$) symmetry in 2d.
\begin{table}[h]
\centering
\begin{tabular}{c|cc|c|c}
 superfield & $U(1)_I \subset SU(2)_I$ & $U(1)_\CC'$ & $U(1)_r$ & components\\
 \hline \hline
 ${U}$ & $0~ (0, \half)$ & $0$ & $0~ (0, \half)$& $A_{+\dot{+}}, \lambda_-$\\
 $\Theta$ & $-\half$ & $0$ & $\half$& $\tilde{\lambda}_-$ \\
 \hline
 $\S^{(i)}$ & $0~ (0, -\half)$ & $-1$ & $0~ (0, -\half)$& $A_{-\dot{+}}, \lambda_+^\dagger$\\
 $\tilde{\S}^{(i)}$ & $0~ (0, -\half)$ & $1$ & $1~ (1, \half)$& $\phi, \tilde{\lambda}_+$\\
 \hline \hline
 $\Phi$ & $\half~ (\half, 0)$ & $0$ & $0~ (0, -\half)$ & $q, \psi_+$\\
 $\tilde{\Phi}$ & $\half~ (\half, 0)$ & $0$ & $0~ (0, -\half)$ & $\tilde{q}, \tilde{\psi}_+$ \\
 \hline
 $\Gamma^{(i)}$ & $0$ & $1$ & $\half$ & $\psi_-^\dagger$ \\
 $\tilde{\Gamma}^{(i)}$ & $0$ & $1$ & $\half$ & $\tilde{\psi}_-^\dagger$
\end{tabular}
\caption{The matter content of the $U(1)_r$ twisted free vector/hypermultiplet in terms of $\CN=(0, 2)$ superfields. $({U}, \Theta)$ form an $\CN=(0, 4)$ vector multiplet, and $(\S, \tilde{\S})$ form a twisted hypermultiplet. The superfields $(\Phi, \tilde{\Phi})$ form a hypermultiplet and $\Gamma, \tilde{\Gamma}$ are the Fermi multiplets. Here $i=1, \cdots, g$. }
\label{table:twistFreeS}
\end{table}
The components $(A_{+ \dot{+}}, \lambda_-)$ forms an vector $\CN=(0, 2)$ multiplet ${U}$, and $(\tilde{\lambda}_-)$ form a Fermi multiplet $\Theta$. The components $(A_{-\dot{+}}, \lambda_+^\dagger)$ form a chiral multiplet $\Sigma$, and $(\phi, \tilde{\lambda}_+)$ form a chiral multiplet $\tilde{\Sigma}$. We have $g$ copies of $\S, \tilde{\S}$.
Now, from the 4d hypermultiplet, we get chiral multiplets $\Phi$ and $\tilde{\Phi}$ from $(\tilde{q}, \tilde{\psi}_+)$ and $(q, \psi_+)$ respectively. We get Fermi multiplets $\Gamma, \tilde{\Gamma}$ from $\psi_-, \tilde{\psi}_-$ respectively.  We summarize this in table \ref{table:twistFreeS}.

\paragraph{$SU(2)_I$ twisting}

Let us consider the case of $SU(2)_I$ twisting. For this case, we get $\CN=(2, 2)$ supersymmetry in 2d. Now all the components of the hypermultiplets become spinors on $\CC$. We get a pair of chiral multiplets $Q=(q, \psi_+, \psi^\dagger_- )$, $\tilde{Q}=(\tilde{q}, \tilde{\psi}_+, \tilde{\psi}_{\dot{-}}^\dagger)$ in 2d, that transform as spinors on $\CC$.

When twisting the vector multiplet, we get $1$ $\CN=(2, 2)$ vector multiplet $U$ from $(A_{+ \dot{+}}, \lambda_-, \tilde{\lambda}_+, \phi)$, and $g$ $\CN=(2, 2)$ chiral multiplets $\Phi$ from $(A_{- \dot{+}}, \lambda_+^\dagger, \tilde{\lambda}_{\dot{+}})$. We summarize the matter content and charges on the table \ref{table:twistFreeI}.
\begin{table}[h]
\centering
\begin{tabular}{c|cc|c|c}
superfield & $U(1)_r \propto U(1)_A$ & $U(1)_I \propto U(1)_V$ & $SO(2)_C''$ & components \\
\hline \hline
$U$ & $0~ (0, \half, -\half, 1)$ & $0 ~ (0, \half, \half, 0)$ & $0$ & $(A_{+ \dot{+}}, \lambda_-, \tilde{\lambda}^\dagger_{\dot +}, \phi^\dagger)$ \\
$\Phi$ & $0~ (0, -\half, \half)$ & $0~ (0, -\half, -\half)$ & $1$ & $(A_{- \dot{+}}, \lambda_+^\dagger, \tilde{\lambda}_{-})$ \\
\hline
$Q$ & $0~ (0, -\half, \half)$ & $\half~ (\half, 0, 0)$ & $\half$ & $(q, \psi_+, \psi^\dagger_- )$ \\
$\tilde{Q}$ & $0~ (0, -\half, \half)$ & $\half~ (\half, 0, 0)$ & $\half$  & $(\tilde{q}, \tilde{\psi}_+, \tilde{\psi}_{\dot{-}}^\dagger)$
\end{tabular}
\caption{The matter content of the $SU(2)_I$ twisted free vector/hypermultiplets in terms of $\CN=(2, 2)$ superfields. Here $R$-charges of the superfield and components are written simultaneously. }
\label{table:twistFreeI}
\end{table}

Note that both $U(1)_r$ and $U(1)_I$ become the $R$-symmetry of the theory upon appropriate rescaling since supercharges are charged under them. We see that the vector $R$-charge is given by $R_V = 2I$ and the axial $R$-charge is given by $R_A = 2 r$, which is consistent with $\CN=(2, 2)$ superconformal symmetry. We can write left/right-moving $R$-charges to be $(J_L, J_R)=(I-r, I+r)$. Note that under this charge assignment, $\CN=(2, 2)$ supercharges $Q_-^1, Q_+^2, \tilde{Q}_{\dot +}^1, \tilde{Q}_{\dot -}^2$ have $R$-charges $(J_L, J_R) = (0, 1), (-1, 0), (1, 0), (0, -1)$.

The number of chiral multiplets of the $\CN=(2, 2)$ twist (or $SU(2)_I$ twist) is given by the number of harmonic spinors on the curve $\CC_g$ or $h^0(\CC_g, K^{\half})$. This number depends on the choice of spin structure on $\CC_g$ \cite{MR0358873}.

\bibliographystyle{jhep}
\bibliography{refs}

\end{document}